\documentclass[a4paper]{article}
\usepackage{amsmath,amsthm,amssymb}
\usepackage{doi}
\usepackage{float}
\usepackage{nicefrac}
\usepackage{placeins}
\usepackage{pgfplots}
\pgfplotsset{compat=1.15}
\usepackage{mathrsfs}
\usetikzlibrary{arrows}
\usepackage{todonotes}
\usepackage[dvipsnames]{xcolor}
\usepackage{complexity}
\usepackage{xspace}
\usepackage[capitalise]{cleveref}
\usepackage{subcaption}
\usepackage{authblk}
\usepackage{thm-restate}
\usepackage{fullpage}

\newtheorem{theorem}{Theorem}
\newtheorem{corollary}[theorem]{Corollary}
\newtheorem{definition}[theorem]{Definition}
\newtheorem{lemma}[theorem]{Lemma}
\newtheorem{claim}[theorem]{Claim}

\newcommand{\felix}[1]{\todo[linecolor=cyan,backgroundcolor=cyan!25,bordercolor=cyan]{F:~#1}}

\newcommand{\RM}{\textsc{Rainbow Matching}\xspace}
\newcommand{\DCS}{\textsc{Maximum $(l, u)$-Matching}\xspace}
\newcommand{\DCSshort}{\textsc{Maximum $(l, u)$-Matching}\xspace}

\tikzset{               
        invisible/.style={opacity=0},
        visible on/.style={alt={#1{}{invisible}}},
        alt/.code args={<#1>#2#3}{%
          \alt<#1>{\pgfkeysalso{#2}}{\pgfkeysalso{#3}} 
        },
        cross/.style={cross out, thick,draw=black, minimum size=2*(#1-\pgflinewidth), inner sep=0pt, outer sep=0pt},
        cross/.default={0.25em},
        edge/.style={ultra thick,black},
        arc/.style={very thick,black,->},
        medge/.style={decorate,very thick,decoration={snake}},
        aedge/.style={very thick,dashed,black},
        dedge/.style={thick,->},
        availedge/.style={thick,blue},
        vertex/.style={inner sep=0.25em,shape=circle,thick,draw,node distance=4em},
        smalledge/.style={thick,DodgerBlue},
        smallvertex/.style={inner sep=0.1em,shape=circle,thick,draw=Black,node distance=4em}
}  

\title{A Complexity Dichotomy for Generalized Rainbow Matchings Based on Color Classes} 
\date{}

\author[1]{Felix Hommelsheim\footnote{Corresponding author}}
\author[2]{Pia Jehmlich}
\author[3]{Moritz Mühlenthaler}

\affil[1]{University of Cologne, Germany, hommelsheim@cs.uni-koeln.de, ORCID: 0000-0003-4444-9793}

\affil[2]{Freie Universität Berlin, Germany, p.jehmlich@fu-berlin.de}

\affil[3]{G-SCOP, Universit\'{e} Grenoble-Alpes, France, moritz.muhlenthaler@grenoble-inp.fr, ORCID: 0000-0002-2729-127X}

\begin{document}

\pagenumbering{gobble}
\maketitle

\begin{abstract}
Given an edge-colored graph, the Maximum Rainbow Matching problem asks for a maximum-cardinality matching of the  graph that contains at most one edge from each color. We provide the following complexity dichotomy for this problem based on the structure of the color classes: Maximum Rainbow Matching admits a polynomial-time algorithm if almost every color class is a complete multipartite graph and it is \NP-hard otherwise. 

To prove the \NP-hardness-part of the dichotomy, we first show that the problem remains \NP-hard even if every color class is a subgraph on four vertices that is either a matching of size two, a path on four vertices or a paw. We then leverage this result to all color classes that are not complete multipartite graphs. For this purpose, we introduce color-closed graph classes, which seem to be an appropriate notion for obtaining complexity classifications for rainbow problems and may be of independent interest. To prove the positive part of the dichotomy, we show that the problem essentially reduces to computing a maximum $(l, u)$-matching, where we heavily exploit that almost all color classes are complete multipartite graphs. In the case where all color classes are complete multipartite, we provide a polynomial-time algorithm that computes a maximum matching containing at most $m_i$ edges from each color class $i$.
\end{abstract}

\newpage

\pagenumbering{arabic}

\section{Introduction}

Given a combinatorial optimization problem with a set of feasible solutions $\mathcal{S} \subseteq 2^X$, a rainbow version of the problem assigns a color to each element of the ground set $X$ and asks for an optimal solution~$S \in \mathcal{S}$ under the additional restriction that each color appears at most once.
Rainbow versions of classical problems that admit efficient algorithms, such as Maximum Matching or $s$–$t$ Connectivity, are known to be \NP-hard, even if each color appears at most twice in the ground set~\cite{Complexity_results_for_rainbow_matchings, diameter2}. A notable exception is the rainbow version of the Spanning Tree problem, which can be represented as the intersection of a graphic matroid and a partition matroid, and therefore admits a polynomial-time algorithm~\cite{schrijver}. 

Rainbow versions of combinatorial optimization problems naturally model requirements related to fairness and diversity that are important both in theory and in practice. 
For example, in social networks, where vertices represent individuals labeled by demographic groups (colors) and edges represent conflicts, a rainbow independent set is a conflict-free group of representatives with at most one member from each demographic.
Furthermore, in communication networks, channels (edges) may be labeled by frequency bands, and one may want to construct a path that avoids reusing the same frequency band in order to reduce the risk of interference; this corresponds to the rainbow $s$-$t$ Connectivity problem. 
Finally, we may consider planning activities that can be carried out in pairs. The edges of the graph represent participants compatibility and the colors represent the different activities. For each activity $i$, there is a budget: it is available at most $m_i$ times. 
The goal is to maximize the number of activities carried out under budget and compatibility constraints;
this is a maximum rainbow matching if all~$m_i = 1$.

From a theoretical perspective, rainbow problems are interesting because they connect several classical topics in combinatorics and algorithms.
For instance, they generalize transversals of latin squares and in extremal combinatorics, they are closely related to anti-Ramsey theory, which studies how large a structure can be while avoiding rainbow substructures; see, e.g.,~\cite{aharoni2009rainbow,drisko1998transversals, hatami2008lower}.
Rainbow constraints correspond to adding a partition matroid on top of some underlying combinatorial optimization problem, which makes them interesting in the context of computational complexity, approximation algorithms, and parameterized algorithms~\cite{gupta2019parameterized, gupta2020quadratic,Complexity_results_for_rainbow_matchings,matching-complete-graph}.

In this paper, we study rainbow versions of the Maximum Matching problem. For this purpose, we consider edge-colored graphs $(V, E; c)$, where $(V, E)$ is an undirected loopless graph that may contain parallel edges and $c : E \to \{1, 2, \ldots, k\}$ is a mapping that assigns to each edge  $E$ a \emph{color} in $\{1, 2, \ldots, k\}$.
A \emph{matching} $M \subseteq E$ of $(V, E)$ is a set of edges, such that no two edges in $M$ share a vertex.
A matching $M$ is \emph{$c$-rainbow} if all edges in $M$ have distinct colors according to $c$. We will just say that $M$ is \emph{rainbow} if the edge-coloring is clear from the context.
Given an edge-colored graph $G = (V, E; c)$ and a number $p \in \mathbb{N}$, the problem \RM asks whether $G$ admits a rainbow matching of size at least~$p$.
The maximization version of the problem, referred to as \textsc{Maximum Rainbow Matching}, asks for a rainbow matching of maximum cardinality of a given edge-colored graph. 

For an edge-colored graph $G = (V, E; c)$ and a color $i$, the subset of edges of color $i$ is called \emph{color class}~$i$. 
A color class is \emph{trivial} if it consists of at most one edge and \emph{non-trivial} otherwise.
For convenience, we may consider each color class $i$ as a subgraph $G_i$ of $G$, whose vertices are those that are incident to an edge of color $i$ and whose edges are precisely those of color~$i$.
We also consider generalizations of the above two problems, where we are additionally given an integer $m_i \geq 1$ for each color class $i$, and the task is to compute a matching $M$ (either of cardinality at least $p$ or of maximum cardinality), such that each color $i$ appears at most $m_i$ times in $M$. We refer to the respective problems as \textsc{Generalized Rainbow Matching} and \textsc{Maximum Generalized Rainbow Matching}.

The current understanding of the complexity of (\textsc{Maximum}) \RM is incomplete, but there are several results for restricted input graphs and for certain restrictions of the color classes. Le and Pfender showed that \RM remains \NP-complete,
even if the input graphs are restricted to paths~\cite{Complexity_results_for_rainbow_matchings}.  
Hence, \textsc{Rainbow Matching} is \NP-complete on graphs of treewidth 1. 
Furthermore, they gave complexity classifications for certain classes of forests in terms of forbidden paths of a certain length as well as
fixed-parameter algorithms along similar lines, where the parameter is the number of connected components~\cite{Complexity_results_for_rainbow_matchings}.
Note that \textsc{Maximum Rainbow Matching} admits a polynomial $(\nicefrac{2}{3} - \varepsilon)$-approximation algorithm by an approximation-preserving reduction to the \textsc{Maximum Independent Set} problem on $K_{1, 4}$-free graphs~\cite{hurkens1989size,Complexity_results_for_rainbow_matchings}.
Further parameterized complexity results can be found in~\cite{gupta2019parameterized, gupta2020quadratic}.
In~\cite{matching-complete-graph}, Paluch and Wasylkiewicz showed that \textsc{Maximum Rainbow Matching} is solvable in polynomial time, if for each color class $i$, the input graph $G$ is a clique on $V(G_i)$ and additionally the non-trivial color classes are almost vertex-disjoint, that is, two non-trivial color classes share at most one vertex. 

\paragraph{Notation.} A \emph{clique} in a graph $G$ is a complete subgraph of $G$ and an independent set is the complement of a clique. 
For an integer $q \geq 1$, we denote by $P_q$ a simple path on $q$ vertices. We let $\overline{G}$ be the complement of a graph $G$.
A graph $G=(V, E)$ is \emph{complete multipartite} (CM) if it admits a partition $\{ V_1, V_2, ..., V_q \}$ of $V$ for some positive integer $q$, such that for $1 \leq i \leq q$, the subgraph $G[V_i]$ induced by $V_i$ is an independent set and for  $1 \leq i < j \leq q$, the subgraph $G[V_i \cup V_j]$ is complete bipartite.
We call the sets $V_1, ..., V_q$ the \emph{parts} of $G$.
A part is \emph{trivial} if it consists of a single vertex and \emph{non-trivial} otherwise.
Notice that the class of CM graphs includes complete (bipartite) graphs.
Furthermore, CM graphs are exactly the $\overline{P}_3$-free graphs, that is, the graphs that do not contain the complement of $P_3$ as induced subgraph.

\subsection{Our Results}

Our contribution is a complete classification of the
complexity of \RM based on the structure of the color
classes. The algorithmic part of the classification can be seen as an extension
of work of Paluch and Wasylkiewicz in~\cite{matching-complete-graph},
who treat the case that the color classes are almost vertex-disjoint cliques. 
We generalize their result to the case that almost all color classes are
complete multipartite graphs, and do not require the color classes to be almost vertex-disjoint.
Furthermore, we show that otherwise
the problem is \NP-hard.  For this purpose, we introduce \emph{color-closed} graph
classes; see \Cref{fig:color-closed} for an illustration. 
This definition is crucial for obtaining a complexity dichotomy for \RM and can be thought of as a natural analogue of a \emph{constraint language} for the Constraint Satisfaction Problem. The notion may be of independent interest for other rainbow problems.

\begin{figure}[t]
    \centering
    \begin{subfigure}[t]{0.2\textwidth} 
        \centering
        \begin{tikzpicture}
            \node[smallvertex] (v1) {};
            \node[smallvertex, right=3em of v1] (v2) {};
            \node[smallvertex, below of=v1] (v3) {};
            \node[smallvertex, right=3em of v3] (v4) {};

            \draw[edge] (v1) to [bend left=25](v3);
            \draw[edge] (v2) to [bend left=25](v4);
            \draw[edge,RoyalBlue] (v1) to [bend right=25](v3);
            \draw[edge,RoyalBlue] (v2) to [bend right=25](v4);
        \end{tikzpicture}
        \caption{\label{fig:ce:a}}
    \end{subfigure}
    \begin{subfigure}[t]{0.2\textwidth} 
        \centering
        \begin{tikzpicture}
            \node[smallvertex] (v1) {};
            \node[smallvertex, right=3em of v1] (v2) {};
            \node[smallvertex, below of=v1] (v3) {};
            \node[smallvertex, right=3em of v3] (v4) {};

            \draw[edge] (v1) to (v3);
            \draw[edge] (v2) to (v4);
            \draw[edge,RoyalBlue] (v1) to (v2);
            \draw[edge,RoyalBlue] (v3) to (v4);
        \end{tikzpicture}
        \caption{\label{fig:ce:b}}
    \end{subfigure}
    \begin{subfigure}[t]{0.25\textwidth} 
        \centering
        \begin{tikzpicture}
            \node[smallvertex] (v1) {};
            \node[smallvertex, right=3em of v1] (v2) {};
            \node[smallvertex, right=3em of v2] (v3) {};
            \node[smallvertex, below of=v1] (v4) {};
            \node[smallvertex, below of=v2] (v5) {};
            \node[smallvertex, below of=v3] (v6) {};

            \draw[edge] (v1) to [bend left=25] (v4);
            \draw[edge] (v3) to (v6);
            \draw[edge,RoyalBlue] (v1) to [bend right=25] (v4);
            \draw[edge,RoyalBlue] (v2) to (v5);
        \end{tikzpicture}
        \caption{\label{fig:ce:c}}
    \end{subfigure}
    \begin{subfigure}[t]{0.25\textwidth} 
        \centering
        \begin{tikzpicture}
            \node[] (v1) {};
            \node[smallvertex, right=3em of v1] (v2) {};
            \node[smallvertex, right=3em of v2] (v3) {};
            \node[smallvertex, below of=v1] (v4) {};
            \node[smallvertex, below of=v2] (v5) {};
            \node[smallvertex, below of=v3] (v6) {};

            \draw[edge] (v2) to (v5);
            \draw[edge] (v3) to (v6);
            \draw[edge,RoyalBlue] (v4) to (v5);
            \draw[edge,RoyalBlue] (v2) to (v3);
        \end{tikzpicture}
        \caption{\label{fig:ce:d}}
    \end{subfigure}
    \caption{Four color-equivalent graphs (see \cref{def:colorclosed}): each graph is composed of two $2K_2$ color classes.} 
 \label{fig:color-closed} 
\end{figure}
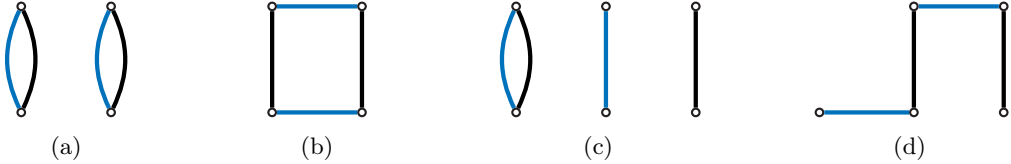

\begin{definition}[color-closed graph class]
    \label{def:colorclosed}
    For an edge-colored graph $G$, let $\eta(G)$ denote the index set of non-trivial color classes.
    Let $G$ and $H$ be edge-colored graphs with color classes $G_1, ..., G_k$ and $H_1, ..., H_{k'}$, respectively.
    Then $G$ and $H$ are \emph{color-equivalent} if there is a bijection $\pi : \eta(G) \to \eta(H)$, such that $G_i$ is isomorphic to $H_{\pi(i)}$ for $i \in \eta(G)$.
    A class $\mathcal{G}$ of edge-colored graphs is \emph{color-closed} if for each $G \in \mathcal{G}$, the class $\mathcal{G}$ contains all edge-colored graphs that are color-equivalent to $G$.
\end{definition}

\begin{definition}[CM-colored graphs] \label{def: F_r freeee}
    Let $\alpha \in \mathbb{N}$ be constant  and let $G = (V, E; c)$ be an edge-colored graph. Then $G$ is \emph{strictly complete-multipartite-colored} (strictly CM-colored) if each color class of $G$ is complete multipartite.
    We say that $G$ is \emph{$\alpha$-CM-colored} if all except at most $\alpha$ color classes are CM graphs. 
    Furthermore, we call a class $\mathcal{G}$ of edge-colored graphs \emph{CM-colored} if there exists some constant $\alpha \in \mathbb{N}$, such that each graph $G \in \mathcal{G}$ is $\alpha$-CM-colored. 
\end{definition}

Our main result is the following complexity dichotomy for \RM.

\begin{restatable}{theorem}{theoremmain}
    \label{thm: dichotomy result rainbow matchings}
    Let $\mathcal{G}$ be a color-closed class of edge-colored graphs. Then \RM restricted to $\mathcal{G}$ admits a polynomial-time algorithm if $\mathcal{G}$ is $\alpha$-CM-colored and it is \NP-complete otherwise.
\end{restatable}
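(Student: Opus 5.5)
The proof splits into the two directions, each reduced to the statements announced in the abstract.

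\emph{Hard direction.} Suppose $\mathcal{G}$ is color-closed but not CM-colored. Then for every $\alpha$ there is a graph in $\mathcal{G}$ with more than $\alpha$ non-CM color classes. Each non-CM color class contains an induced $\overline{P}_3$, that is, vertices $x,y,z$ with only the edge $yz$ present among them.

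First I would prove NP-hardness of \RM when every color class is a $2K_2$, a $P_4$ or a paw. This would go by reduction from a known NP-complete problem, for instance the path result of Le and Pfender or 3-SAT. Each edge of such a color class that is not itself in a colored pair would be made harmless by attaching pendant gadgets, or by padding with disjoint copies.

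Next I would lift this to an arbitrary non-CM graph $H$. Because $\mathcal{G}$ is color-closed, we may freely choose how the non-trivial color classes are glued together, provided each one is isomorphic to some class occurring in $\mathcal{G}$. The gluing is done so that each class overlaps the rest of the instance only on a chosen four-vertex piece containing the $\overline{P}_3$. All other edges of the class are placed on private vertices, where they can never help enlarge a matching beyond a fixed amount. We then raise the target $p$ by the number of private trivial edges we add, so that these are always matched; this forces the private edges of each class to be blocked.

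A difficulty is that the class does not shrink below $\alpha$ arbitrarily, while hard instances need unboundedly many color classes of the required shapes. I would handle this with a Ramsey or pigeonhole argument. Since $\mathcal{G}$ contains graphs with arbitrarily many non-CM classes, either infinitely many pairwise non-isomorphic non-CM types occur, or a single non-CM type $H$ occurs with unbounded multiplicity. I expect this case analysis to be the main obstacle. In particular, a copy of $H$ may itself be large, and only polynomially sized instances are allowed. To deal with this, I would pick graphs in $\mathcal{G}$ whose non-CM classes have size polynomial in their number, or else restrict to a nonuniform or constant-size $H$.

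\emph{Easy direction.} Let $G$ be $\alpha$-CM-colored. Guess which edge, if any, is used from each of the at most $\alpha$ non-CM classes; there are $n^{O(\alpha)}$ choices. Delete the matched vertices and the remaining edges of those classes. What is left is strictly CM-colored. For this instance, I would show that a maximum matching using at most one edge per class reduces to a maximum $(l,u)$-matching in an auxiliary graph, exploiting the fact that any matching inside a complete multipartite class can be exchanged for a single edge. Alternatively, I would invoke the generalized algorithm with budgets $m_i$ stated in the abstract.

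Membership in NP is immediate, since a rainbow matching of size at least $p$ can be checked in polynomial time.
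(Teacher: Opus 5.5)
\textbf{The gap is in the hard direction.} Your easy direction is the paper's argument: enumerate the $O(|E|^\alpha)$ partial rainbow matchings on the non-CM classes, delete, use that CM graphs are closed under vertex deletion, and apply the $(l,u)$-matching algorithm for strictly CM-colored graphs. The hard direction, however, stops at its crucial step. You leave the case split ``infinitely many non-isomorphic non-CM types versus one type of unbounded multiplicity'' open, and it pigeonholes on the wrong object.

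The missing idea is to pigeonhole on a four-vertex \emph{induced} subgraph, not on the isomorphism type of whole classes. Suppose $u,v,w$ induce $\overline{P}_3$ in a class, with $uv$ the edge. Since a class has no isolated vertices, $w$ has a neighbor $x$ in that class. Then $\{u,v,w,x\}$ induces $2K_2$, $P_4$ or a paw, depending on whether $x$ is adjacent to none, one or both of $u,v$. By pigeonhole over these three types, one fixed $C\in\{2K_2,P_4,\text{paw}\}$ occurs as an induced subgraph in at least $\ell$ classes of some $G\in\mathcal{G}$, for every $\ell$, however different those classes are.

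Now reduce from Rainbow Matching where \emph{all} classes are isomorphic to this $C$. This means you need the three base results separately, not for mixtures. Glue the induced copy of $C$ in the $i$-th host class onto the $i$-th class of the instance, and hang a pendant edge of a fresh color on every non-identified vertex. Inducedness ensures the host class has no other edges among the glued vertices. A simple exchange shows some maximum rainbow matching contains all pendant edges, so the optima differ by exactly the number of pendants. Isomorphism between host classes is never needed.

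\textbf{Color-closure.} It does not let you choose each class to be ``isomorphic to some class occurring in $\mathcal{G}$''. Color-equivalence is a bijection on non-trivial classes, so your instance must contain exactly the non-trivial classes of one $G\in\mathcal{G}$, including those you do not glue. The paper places these unused classes vertex-disjointly and neutralizes them with the same fresh-colored pendant edges; only trivial classes may be added freely.

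\textbf{Size of $G$.} Your concern that $G$ may not have size polynomial in $\ell$ is legitimate, but it is orthogonal to your case analysis. The paper does not address it either; it simply treats a suitable $G$ as available in polynomial time.

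\textbf{Base cases.} Your sketch does not work as stated for $P_4$ and paw. There the extra edges of each class join vertices of the instance, so pendant padding cannot neutralize them. The paper instead extends its 3-occurrence 3SAT reduction, whose target is a perfect matching, and shows by a parity argument that the added edges lie in no perfect matching.
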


Here is a brief outline of the proof of \Cref{thm: dichotomy result rainbow matchings}. Notice that \RM is in \NP\xspace and let $\mathcal{F} = \{2K_2, P_4, \text{paw}\}$ be the three graphs shown in \Cref{fig: example small graphs}.
These three graphs are the only 4-vertex graphs without isolated vertices that are not CM.
To show the \NP-hardness-part of the dichotomy, we first prove that \RM remains \NP-hard, even if all color classes are pairwise isomorphic and isomorphic to one of the graphs in $\mathcal{F}$.
We then generalize this result to arbitrary color classes that contain graphs in $\mathcal{F}$ as induced subgraphs.
To conclude, we show that every color class (without isolated vertices) that is not a complete multipartite graph contains an induced subgraph isomorphic to one of the graphs in $\mathcal{F}$.

\begin{figure}[t]
    \centering
    \begin{subfigure}[t]{0.30\textwidth} 
        \centering
        \begin{tikzpicture}
            \node[smallvertex] (v1) {};
            \node[smallvertex, right of=v1] (v2) {};
            \node[smallvertex, below of=v1] (v3) {};
            \node[smallvertex, right of=v3] (v4) {};

            \draw[edge] (v1) -- (v2);
            \draw[edge] (v3) -- (v4);
        \end{tikzpicture}
        \caption{$2K_2$}
    \end{subfigure}
    \begin{subfigure}[t]{0.30\textwidth} 
        \centering
        \begin{tikzpicture}
            \node[smallvertex] (v1) {};
            \node[smallvertex, right of=v1] (v2) {};
            \node[smallvertex, below of=v1] (v3) {};
            \node[smallvertex, right of=v3] (v4) {};
            \draw[edge] (v1) -- (v3) -- (v4) -- (v2);
        \end{tikzpicture}
        \caption{$P_4$}
    \end{subfigure}
    \begin{subfigure}[t]{0.30\textwidth} 
        \centering
        \begin{tikzpicture}
            \node[smallvertex] (v1) {};
            \node[smallvertex, right of=v1] (v2) {};
            \node[smallvertex, below of=v1] (v3) {};
            \node[smallvertex, right of=v3] (v4) {};

            \draw[edge] (v3) -- (v2) -- (v1) -- (v3) -- (v4);
        \end{tikzpicture}
        \caption{Paw}
    \end{subfigure}
  \caption{Three graphs on four vertices.}
  \label{fig: example small graphs}
\end{figure}
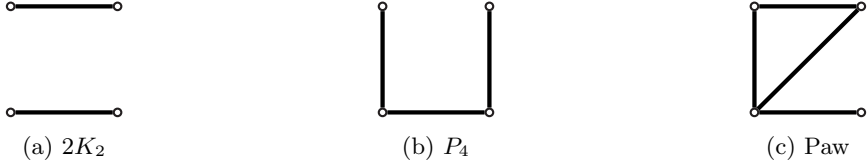

In order to prove the algorithmic part of the dichotomy, we first consider CM-colored graphs. For these graphs, we provide a polynomial-time algorithm for the \textsc{Maximum Generalized Rainbow Matching} problem, where we are allowed to pick at most $m_i$ edges from color class $i$. 
We reduce this problem to the \textsc{Maximum $(l, u)$-Matching} problem, which admits a polynomial-time algorithm~\cite{schrijver} and is defined as follows:
Given a graph $G=(V, E)$ and two functions $l, u: V \rightarrow \mathbb{N}$ such that $l(v) \leq u(v)$ for all $v \in V$, the problem \textsc{Maximum $(l, u)$-Matching}
asks for an edge set $X \subseteq E$ of maximum cardinality, such that for every vertex $v \in V$ we have $l(v) \leq \deg_X(v) \leq u(v)$, where $\deg_X(v)$ denotes the number of edges of $X$ incident to $v$.
The functions $l$ and $u$ are called lower and upper degree bounds, respectively.
We give a polynomial-time algorithm for \textsc{Maximum Generalized Rainbow Matching} on strictly CM-colored graphs, as summarized in \cref{thm:dichotomy-result-rainbow-l-u-matchings}.
\begin{restatable}{theorem}{theoremmaintwo}
    \label{thm:dichotomy-result-rainbow-l-u-matchings}
    \textsc{Generalized Rainbow Matching} on strictly CM-colored graphs admits a polynomial-time algorithm.
\end{restatable}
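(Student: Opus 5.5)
We reduce \textsc{Maximum Generalized Rainbow Matching} on a strictly CM-colored graph $G=(V,E;c)$ with budgets $m_1,\dots,m_k$ to \DCS on an auxiliary graph $H$, which is solvable in polynomial time~\cite{schrijver}. The guiding observation is a \emph{local exchange property} of complete multipartite graphs. Let $G_i$ have parts $V^i_1,\dots,V^i_q$, and let $M$ be a matching. Then only the following matters about $M \cap E(G_i)$: which vertices it covers, and how many covered vertices lie in each part. A set $S\subseteq V(G_i)$ of even size can be perfectly matched inside $G_i$ exactly when no part contains more than $|S|/2$ vertices of $S$. This is the standard criterion for perfect matchings in complete multipartite graphs, provable by greedily pairing vertices from the two largest remaining parts. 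Since $M$ uses at most $m_i$ edges of color $i$, we need $|S|\le 2m_i$. The budget and the color constraints are thereby translated into degree constraints on an auxiliary structure.

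Concretely, I would build $H$ as follows. Keep every original vertex $v\in V$ with $l(v)=0$ and $u(v)=1$; these bounds express that $v$ is matched at most once. For each color $i$ and each part $V^i_j$, add a \emph{part gadget} vertex $p^i_j$, joined to every $v\in V^i_j$ by an edge, with $l=0$ and $u=\infty$. For each color $i$, add a \emph{color hub} $h_i$ joined to every $p^i_j$ by edges. Selecting the edge $vp^i_j$ means that $v$ is matched by an edge of color $i$. The hub then has to enforce two things: that half of the selected endpoints of color $i$ pair up across different parts, and that at most $m_i$ pairs are formed. I would encode each color-$i$ edge as two ``half-edges'' that meet at $h_i$. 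To do this, put $k_i$ parallel slot vertices between the parts and the hub, or use $2m_i$ slots, each slot of capacity one. In addition, each part $p^i_j$ must send at most half of the color-$i$ load through the hub. The cleanest way to get this is to use the criterion above in the form ``no part has more than half of the load'' and replace it by its pairing form. Concretely: create $m_i$ edge-slot vertices $s^i_1,\dots,s^i_{m_i}$, each with $l\in\{0,2\}$ realized by parity gadgets. The simplest version uses $u(s)=2$ together with a forbidden degree $1$, which is implemented by the standard trick of attaching a pendant with $l=u=1$ through a subdivided path. Each slot $s$ is joined to every part vertex $p^i_j$ by a path of length two, so that the two units absorbed by a slot must come from two different parts. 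A selected slot then corresponds to one edge between distinct parts, which is an edge of $G_i$ by complete multipartiteness.

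The size of the objective is then the number of used slots. With a degree-bounded matching maximizing $|X|$, the objective differs from the rainbow matching size by a constant offset, which is fixed by the gadget edges that are always taken. I would adjust this by weighting or counting through the standard transformation from weighted to cardinality $(l,u)$-matching. To check the reduction, one direction maps a generalized rainbow matching to a selection in $H$ directly. The other direction reads off, for each slot, a pair of vertices from distinct parts and uses complete multipartiteness to conclude that they are adjacent in $G_i$. Compared to Paluch and Wasylkiewicz~\cite{matching-complete-graph}, no vertex-disjointness of color classes is needed, because each original vertex has $u(v)=1$ across all colors.

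The main obstacle is the gadget that forces a slot to take exactly $0$ or $2$ units from \emph{distinct} parts. Forbidding degree exactly $1$ is a gap constraint, which plain $(l,u)$ bounds cannot express directly. I would first try to avoid gaps altogether. To do so, I would exploit the fact that in a complete multipartite graph the pairing condition only says that the load of each part is at most the total load of the remaining parts. This can be modeled by a bipartite flow-like structure in which each selected $v$ sends one unit to $h_i$ via $p^i_j$ and receives a ``partner'' unit. If this fails, I would resort to the known polynomial solvability of general factor problems whose gaps have length at most one (Cornuéjols' theorem), which covers the $\{0,2\}$ constraint.
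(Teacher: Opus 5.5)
Your reformulation is sound, and it uses the CM hypothesis exactly where the paper does. By the perfect-matching criterion for complete multipartite graphs, it suffices to choose, for each color $i$, at most $m_i$ disjoint pairs of vertices lying in distinct parts; complete multipartiteness then supplies the edges.

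\textbf{The gap.} The auxiliary instance $H$ is never actually defined so as to encode this, and the pieces you do specify fail.
\begin{itemize}
\item \emph{The part vertices impose nothing.} With bounds $[0,\infty]$, nothing ties the number of selected edges $vp^i_j$ to the number of units $p^i_j$ passes on to slots. So a vertex $v$ can be ``matched in color $i$'' with no partner, and slots can be filled with no original vertex behind them. Your read-off step therefore fails.
\item \emph{The objective is not a constant offset.} Maximizing $|X|$ simply saturates original vertices through the edges $vp^i_j$, so $|X|$ is not the rainbow matching size plus a constant.
\item \emph{Cornuéjols does not rescue this.} What $p^i_j$ needs is a conservation law between two groups of its incident edges. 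That is not a condition on $\deg(p^i_j)$ alone, so neither interval bounds nor degree sets express it.
\item \emph{The pendant trick cannot forbid degree $1$.} A pendant attached to a slot $s$ through a path meets $s$ in a single edge. It contributes $0$ or $1$, and the union of the two shifted intervals $[l,u]\cup[l-1,u-1]$ is again an interval.
\item \emph{Length-two paths cannot force ``both edges or neither''.} For a path $s,w,p^i_j$ this would need $\deg(w)\in\{0,2\}$, and any interval containing $0$ and $2$ contains $1$. The bound $[1,1]$ gives exclusive-or instead.
\item \emph{The hub $h_i$ is never given bounds or a role.}
\end{itemize}

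\textbf{The missing idea: exact degree constraints obtained by complementation.} This is how the paper proceeds.
\begin{itemize}
\item It first reduces to $m_i=1$ by copying each color class $m_i$ times.
\item It subdivides every edge $pq$ of $C_i$ twice with $[1,1]$ vertices, giving two half-edges and an eliminator. This is a genuine ``$0$ or $2$'' gadget.
\item For each non-trivial part it adds a local vertex adjacent to all subdivision vertices next to that part, with bound exactly $\deg-1$. Hence at most one selected endpoint lies in that part.
\item It adds a universal vertex whose exact bound makes precisely $2n_i-2$ subdivision vertices of color $i$ absorbed.
\end{itemize}
Exactly two subdivision vertices per color then remain free. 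Either they form one eliminator, or they give two half-edges at vertices $x,y$ in distinct parts, so $xy\in E(C_i)$ by the CM property. Moreover $|M'|$ equals a constant plus the number of colors used, so maximizing cardinality is correct.

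\textbf{Repairing your own slot architecture.} It can be fixed in the same spirit:
\begin{itemize}
\item Insert a port $\pi^i_v\in[1,1]$ between $v$ and $p^i_j$, and set $p^i_j\in[|V^i_j|,|V^i_j|]$. Then the number of forwarded units equals the number of selected vertices of $V^i_j$.
\item Join $p^i_j$ to each slot by a path with two inner $[1,1]$ vertices, so a unit is forwarded exactly when both outer edges are used.
\item Give each slot $s\in[2,2]$ a triangle $s,a,b$ with $a,b\in[1,1]$. This forces the real degree of $s$ into $\{0,2\}$.
\end{itemize}
With these gadgets the number of selected edges becomes a constant plus the number of used slots. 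As submitted, however, precisely these components are absent.
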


Combined with a simple enumeration argument for the at most $\alpha$ color classes that are not CM and the hardness-results for \RM, we obtain \Cref{thm: dichotomy result rainbow matchings}.
Note that the analogous statement of \Cref{thm: dichotomy result rainbow matchings} also holds for the maximization version of the problem. 
In the light of \cref{thm:dichotomy-result-rainbow-l-u-matchings}, we believe it would be interesting to explore the setting, where the edges that we may pick from each color class are constrained by a more general matroid than a partition matroid.

\subsection{Related Work}

Adding rainbow constraints to a combinatorial optimization problem imposes the constraints of a partition matroid, where each color class is a part from which we may select at most one item. Hence, optimizing over a matroid under rainbow constraints is equivalent to optimizing over the intersection of two matroids, which can be done in polynomial time~\cite{schrijver}.
However, computing a maximum rainbow forest such that each connected component is rainbow, i.e., contains at most one edge from each color class, is \NP-hard~\cite{rainbow_spanning_forest}.
Further results related to rainbow spanning trees can be found in~\cite{broersma1997spanning}.
Deciding whether a graph contains a rainbow path has been studied by Alon et al.~in~\cite{alon1995color}.
The related rainbow $s$-$t$-connectivity problem has been studied in~\cite{diameter2}, presenting complexity and FPT results; see also~\cite{chen2011complexity}.
The rainbow $s$-$t$-cut problem is known to be \NP-hard on general graphs~\cite{bai2018more} and remains \NP-hard even on planar graphs~\cite{grobler2024solution}. 
We also refer the reader to the following survey for more information on rainbow problems~\cite{kano2008monochromatic-survey}. 

The work of Le and Pfender~\cite{Complexity_results_for_rainbow_matchings} provides a dichotomy for rainbow matchings for certain classes of forests in terms of forbidden paths. 
We point out that their dichotomy is based on the structure of the whole graph and not the individual color classes.
In particular, in their reduction the color classes can have an arbitrary structure and hence their work does not have any immediate implications for our work.
Similarly, our work does not imply any results of their work.

Imposing general matroid constraints instead of those of a partition matroid has received a lot of attention for submodular function maximization~\cite{calinescu2007maximizing,filmus2012tight}: given a submodular set function and a matroid, the task is to select a basis of the matroid of maximum value.
Several different problems related to matchings with matroid constraints, including classical matching, popular matching, and stable matching have been considered in~\cite{csaji2024solving,kakimura2014matching, kamiyama2016popular, kamiyama2017popular, kamiyama2020stable}.
Furthermore, other combinatorial optimization problems have been considered under matroid constraints~\cite{banik2024cuts,hassin1992approximation}.

There are also many results that establish the existence of large rainbow matchings under certain conditions.
Drisko~\cite{drisko1998transversals} proved that any $2n-1$ matchings of size $n$ in a bipartite graph contain a rainbow matching of size $n$. In~\cite{c2}, some additional results on general graphs are provided.
These results are related to the conjectures of Ryser, Brualdi~\cite{brualdi1991combinatorial} and Stein~\cite{stein1975transversals} on partial transversals of size $n-1$ on $n \times n $ Latin squares (resp., size $n$ if $n$ is odd).
Further results related to similar conjectures in this area can be found in~\cite{c4, c5}.
Other works consider the question of how many colors guarantee a rainbow matching~\cite{aharoni2009rainbow, aharoni2015generalization,glebov2014many,kotlar2014large, clemens2016improved}.

\section{Overview}
\label{sec:overview}

In this section, we give an outline of the proof of our main result, which we restate for convenience.

\theoremmain*

We first show the hardness part of \cref{thm: dichotomy result rainbow matchings} and then the polynomial-time part.
To obtain the hardness part, we first show that \textsc{Rainbow Matching} is
\NP-hard if the input graphs are restricted to those, where each color class
is a $2K_2$ by a reduction from
\emph{3-occurrence 3SAT} (\cref{thm: rainbow matching NP-complete
outerplanar}). We then adapt the reduction to show the \NP-hardness of \textsc{Rainbow Matching} for the
cases 1) that each color class is a $P_4$ (\cref{thm: rainbow matching path length three}), and 2) that each color class is a
paw (\cref{thm: rainbow matching triangle plus edge}). 

\begin{restatable}{lemma}{rmtwoedges}
	\label{thm: rainbow matching NP-complete outerplanar}
	\RM is \NP-hard on edge-colored graphs, where each color class is a $2K_2$. 
\end{restatable}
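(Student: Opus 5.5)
We reduce from \emph{3-occurrence 3SAT}: every clause has at most three literals and every variable occurs at most three times, w.l.o.g.\ at most twice positively and at most once negatively. Every color class we introduce will have exactly two disjoint edges, so it is a $2K_2$. The target size $p$ will be the number of colors, so a solution must use every color exactly once.

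\textbf{Variable gadgets.} For a variable $x$, take a path $a_x b_x c_x d_x$. Give the two end edges $a_xb_x$ and $c_xd_x$ the same color $\chi_x$. Let the middle edge $b_xc_x$ be the unique ``negative'' edge. Choosing the end edges of $\chi_x$ is limited to one edge, so the variable selects either $a_xb_x$ (or $c_xd_x$) or relies on the middle. Cleaner: for each occurrence of $x$ we create a private vertex copy, and use pairs of edges of one color to force consistency. Concretely, create a vertex $t_x$ and a vertex $f_x$. The positive occurrences will need $t_x$ to be free and the negative one will need $f_x$ to be free. Add a color $\chi_x$ consisting of edges $t_xt'_x$ and $f_xf'_x$ with fresh vertices $t'_x,f'_x$. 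Since $\chi_x$ must be used exactly once, exactly one of $t_x,f_x$ is covered by the variable gadget, namely ``false'' or ``true'' respectively. The free vertex is available to satisfy occurrences. Because $t_x$ must serve up to two positive occurrences but a vertex can be matched only once, I split $t_x$ into two vertices $t_x^1,t_x^2$. I then make $\chi_x$ range over consistent choices by using two colors: $\{t_x^1u, f_xw\}$ and $\{t_x^2u', f'_x w'\}$. Here $f_x,f'_x$ are two vertices, only one of which is needed by the negative occurrence. Consistency then requires linking the two colors; this linking is the main obstacle.

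\textbf{Clause gadgets.} For clause $C$ with literals $\ell_1,\ell_2,\ell_3$, I want ``at least one literal vertex is free''. I use a clause vertex $s_C$ that must be matched, attached to the free literal vertices. Since colors come in pairs, I give each edge $s_C v_{\ell_j}$ a color whose other edge is a pendant edge on fresh vertices, which is always usable. Then the color is used exactly once anyway, and the literal edges impose no constraint. To force $s_C$ to be matched, I instead pair clause edges among themselves. I introduce $s_C,s'_C$ and a color $\{s_Cs'_C, e\}$ where $e$ is an edge between two vertices each of degree one... This again needs care. The intended counting is as follows: for $p$ equal to the number of colors, every color contributes one edge, and the clause colors can only be realized through an edge at a free literal vertex.

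\textbf{Plan and obstacle.} The concrete steps are:
\begin{enumerate}
\item Build gadgets so that every color is a $2K_2$.
\item Prove that a satisfying assignment yields a rainbow matching using all colors, by picking true-side edges in the variable gadgets and one satisfied literal edge per clause.
\item Conversely, read off the assignment from the variable colors and check that the clause colors force a true literal.
\end{enumerate}
The main difficulty is enforcing consistency across a variable's occurrences, given that each color has only two edges. I would handle it with a cyclic chain of colors around the variable: vertices $v_1,\dots,v_{2k}$ on a cycle with colors $\{v_1v_2,v_3v_4\}$, $\{v_2v_3,v_4v_5\},\dots$ chosen so that using every color exactly once forces all odd edges or all even edges. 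The two parity classes then correspond to true/false, and they leave alternate vertices free for the literal edges. Verifying that no mixed choice uses all chain colors is the key lemma.
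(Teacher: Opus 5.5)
\textbf{There is a genuine gap: what you have is a plan, not a reduction.} Two pieces are missing or wrong.

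First, the clause gadget is never fixed. You note yourself that pairing each edge $s_Cv_{\ell_j}$ with a pendant edge imposes no constraint, and the replacement is left at ``this again needs care''.

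Second, the only concrete consistency mechanism, the cyclic chain, fails as stated. All edges of the $k$ chain colors lie on the cycle $v_1v_2\cdots v_{2k}$. Using each chain color exactly once therefore means choosing $k$ pairwise disjoint edges of $C_{2k}$, i.e.\ all odd edges or all even edges. But $\{v_1v_2,v_3v_4\}$ consists of two odd edges and $\{v_2v_3,v_4v_5\}$ of two even edges, so each parity class contains a monochromatic pair. Hence no rainbow matching uses all chain colors, and with $p$ equal to the number of colors every instance would be a no-instance. Moreover, a perfect matching of the cycle covers all its vertices, so the parity classes do not ``leave alternate vertices free'' for literal edges. The key lemma you defer is therefore false for your chain. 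Insisting that every color be used exactly once is also what makes your clause counting awkward.

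\textbf{The missing idea} in the paper is to let the graph structure, not the colors, enforce consistency, and to set the target to the size of a perfect matching rather than the number of colors; many colors stay unused.

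\emph{Variable gadget.} Each variable $x_i$ gets a $6$-cycle $a_iu_iv_ib_i\bar v_i\bar u_i$ whose edges receive six distinct colors $c_{i,1},\bar c_{i,2},c_{i,3},\bar c_{i,3},c_{i,2},\bar c_{i,1}$ in this order. The gadgets are separate components, so a perfect matching must take one of the two alternating matchings of this cycle: all $c$-edges (true) or all $\bar c$-edges (false). Consistency over the three occurrences is therefore automatic.

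\emph{Clause gadget.} Each clause $C_j$ gets the $6$-cycle $r_{j,1}r_{j,2}r_{j,3}r'_{j,3}r'_{j,2}r'_{j,1}$ plus the chord $r_{j,1}r'_{j,3}$. Every perfect matching of it contains exactly one of the three literal edges $r_{j,1}r'_{j,1}$, $r_{j,1}r'_{j,3}$, $r_{j,3}r'_{j,3}$.

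\emph{Linking colors.} The literal edge for the $d$-th occurrence of $x_i$ gets color $\bar c_{i,d}$ if the literal is positive and $c_{i,d}$ if it is negative. This is the color of an edge of the opposite alternating matching, so a perfect rainbow matching can use the literal edge exactly when the variable cycle chose the matching making that literal true.

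\emph{Padding.} Colors that would have only one edge are completed to a $2K_2$ by their own copy of an edge $s_1s_3$ on an extra gadget $s_1,\ldots,s_4$. This gadget carries two parallel copies of $\{s_1s_2,s_3s_4\}$ in two new colors. Since $s_2$ and $s_4$ are leaves, no perfect matching uses $s_1s_3$.

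With target $3(n+m)+2$, the equivalence with satisfiability is then a short check.
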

To prove~\Cref{thm: rainbow matching NP-complete outerplanar}, we present a polynomial-time gadget-based reduction from 3-occurrence 3SAT to \RM. 
From an instance of 3-occurrence 3SAT, we construct a gadget for each variable and a gadget for each clause to obtain an edge-colored graph that is an equivalent instance of \RM. The edge-coloring is chosen in such a way that in a maximum rainbow matching, each clause gadget contributes one edge corresponding to a variable that makes a clause true.
Furthermore, the consistency between variables and literals is ensured by the choice of the color classes.
The gadgets are outerplanar and the ``connections'' between the gadgets are made only via the color classes, which implies that the hardness result holds for \RM on  outerplanar graphs, where each color class is a $2K_2$ (\cref{cor:outerplanar}).
By a small modification of the reduction we obtain the next two lemmas, that show \NP-hardness of the problem, whenever all color classes are $P_4$s or paws. However, the resulting graphs are not outerplanar. 

\begin{restatable}{lemma}{rmpaththree}
	\label{thm: rainbow matching path length three}
	\RM is \NP-hard on edge-colored graphs, where each color class is a $P_4$.
\end{restatable}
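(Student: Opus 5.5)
We reduce from \RM with every color class a $2K_2$, which is \NP-hard by \cref{thm: rainbow matching NP-complete outerplanar}. Let $(G,p)$ be such an instance with color classes $G_1,\dots,G_k$, and let $G_i=\{a_ib_i,\,c_id_i\}$. We build $G'$ by adding, for every color $i$, a new edge $e_i$ of color $i$ joining $b_i$ and $c_i$. Then the color class $i$ in $G'$ is the path $a_i b_i c_i d_i$, i.e., a $P_4$. We keep the target $p$ unchanged. (If $\{b_i,c_i\}$ is already an edge of another color, this is harmless, since $G'$ may contain parallel edges.)

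The direction $G\Rightarrow G'$ is immediate: a rainbow matching of $G$ is still a rainbow matching of $G'$. For the converse, take a rainbow matching $M'$ of $G'$ with $|M'|\ge p$ and let $e_i\in M'$ be one of the new edges. Since $M'$ is rainbow, no other edge of color $i$ lies in $M'$. Moreover, $a_ib_i$ is compatible with $M'\setminus\{e_i\}$ unless $a_i$ is covered by $M'$, and $c_id_i$ is compatible unless $d_i$ is covered. So the exchange fails only when both $a_i$ and $d_i$ are covered by other edges of $M'$. A purely black-box reduction from \cref{thm: rainbow matching NP-complete outerplanar} therefore does not obviously work, and this exchange step is the main obstacle.

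I would handle it by reusing the concrete gadget construction from the proof of \cref{thm: rainbow matching NP-complete outerplanar}, as the overview indicates ("a small modification of the reduction"), and choosing which two endpoints to join so that the new middle edge is always dominated. Concretely, for each $2K_2$ color class I would pick the orientation so that one endpoint, say $b_i$, is a vertex of degree one in the gadget graph. Alternatively, I would attach fresh pendant vertices: replace $a_ib_i$ by an edge $a_ib_i$ where $b_i$ is a new leaf hanging off the gadget vertex, so that the new edge $b_ic_i$ goes between a fresh leaf and a gadget vertex. With this choice, any matching using $e_i=b_ic_i$ can instead use $c_id_i$ whenever $d_i$ is free, or else we argue from the gadget structure that its size does not exceed what the original instance permits.

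The cleanest version I would aim for is a counting argument. In the $2K_2$ reduction, the satisfiable case achieves exactly a target value $p$ that saturates a prescribed set of gadget vertices, for instance one edge per clause gadget and a fixed number per variable gadget. I would check that adding the middle edges $b_ic_i$ does not increase the maximum number of edges any single gadget can contribute, because the middle edges join vertices in different gadgets or reuse already-saturated vertices. Then a rainbow matching of size $p$ in $G'$ must saturate each gadget exactly as in the original analysis, and the truth assignment can be read off as before. Verifying this upper bound on each gadget's contribution against the specific gadgets is the step I expect to require the real work.
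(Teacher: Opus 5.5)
\textbf{There is a genuine gap.} You correctly see that a black-box reduction from \cref{thm: rainbow matching NP-complete outerplanar} fails and that the concrete gadgets must be used. But the proposal stops exactly where the proof has to happen, and the invariant you plan to check is not the right one.

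\textbf{What must actually be shown.} In the $2K_2$ instance the target $3(n+m)+2$ is half the number of vertices. So a rainbow matching of that size in the new graph is perfect, and what must be shown is that \emph{no perfect matching uses an added middle edge}. A bound on how many edges each single gadget can contribute does not give this. Middle edges run between gadgets, and a perfect matching can use them while every gadget still contributes at most three internal edges.

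\textbf{The choice of endpoints matters, and you leave it open.} Suppose, say, one middle edge were attached at $r'_{j,1}$ and another at $r_{j,3}$. Then the clause gadget minus these two vertices still has the perfect matching $\{r_{j,1}r_{j,2},\, r'_{j,2}r'_{j,3}\}$. A variable $6$-cycle minus two vertices at odd distance is also still perfectly matchable. So nothing local rules out perfect matchings routed through middle edges, and reading off the assignment breaks down.

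\textbf{Your fallbacks do not apply.} The gadgets have no degree-one vertices: they are $6$-cycles, a $6$-cycle plus a chord, and the graph on $S$. Hanging fresh leaves changes both the gadgets and the $2K_2$ classes, so \cref{thm: Hilfslemma zu outerplanar rainbow matchings} no longer applies as stated.

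\textbf{The missing ingredient} is a specific choice of middle edges together with a parity argument, which is what the paper does.
\begin{enumerate}
\item For each literal class, attach the middle edge at the unprimed endpoint ($r_{j,1}$ or $r_{j,3}$) of its clause edge.
\item For each of the two classes inside $S$, add $\{s_2,s_3\}$.
\item For a class $\{uv\},\{s_1s_3\}$, add $\{v,s_1\}$.
\end{enumerate}
With this choice the argument runs as follows.
\begin{itemize}
\item $s_4$ is adjacent only to $s_3$, and $s_2$ only to $s_1$ and $s_3$. This forces $s_3s_4$ and then $s_1s_2$ into every perfect matching, so no middle edge at $S$ is used.
\item Each clause gadget has six vertices, so a perfect matching uses an even number of middle edges there.
\item Middle edges meet a clause gadget only at $r_{j,1}$ and $r_{j,3}$. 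Using two of them therefore leaves $r_{j,2}$ unmatched.
\end{itemize}
Hence the perfect matchings of the new graph coincide with those of $G_\phi$, and \cref{thm: Hilfslemma zu outerplanar rainbow matchings} completes the proof. This is precisely the step your sketch defers as ``the real work''. As it stands, the reduction is not established.
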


\begin{restatable}{lemma}{rmtriangleandedge}
	\label{thm: rainbow matching triangle plus edge}
	\RM is \NP-hard on edge-colored graphs, where each color class is a paw.
\end{restatable}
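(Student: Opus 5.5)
The plan is to reduce from the $2K_2$ case of \cref{thm: rainbow matching NP-complete outerplanar}, turning every color class into a paw by adding fresh edges that can never be used by a large rainbow matching. Let $(G,p)$ be an instance in which every color class is a $2K_2$, say color class $i$ consists of the edges $a_ib_i$ and $c_id_i$. We build $G'$ from $G$ as follows. For each color $i$ we add the edges $a_ic_i$ and $b_ic_i$, both of color $i$. The edges $a_ib_i$, $b_ic_i$ and $c_ia_i$ then form a triangle, and $c_id_i$ is a pendant edge at $c_i$, so color class $i$ in $G'$ is exactly a paw.

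The added edges may be parallel to existing edges of other colors, and the paper allows multigraphs. If one wants to avoid this, one can instead attach the triangle at private copies of the vertices. I would first try the direct construction and argue that it preserves the answer. If it does not, I would fall back to a variant in which the triangle uses a fresh vertex $t_i$: the color class becomes the edges $a_ib_i$, $b_it_i$ and $t_ia_i$ together with the pendant $c_id_i$ attached at $t_i$ via a new edge $t_ic_i$. The vertices of a paw are the triangle plus one pendant, so I would realize the paw as the triangle $a_ib_it_i$ with pendant edge $t_ix$, where $x$ is a fresh vertex.

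This fallback construction is cleaner. Color class $i$ in $G'$ consists of $a_ib_i$, $b_it_i$, $t_ia_i$ and $t_ix_i$, where $t_i$ and $x_i$ are new vertices. The edge $c_id_i$, however, is then lost, so this exact form does not work: the paw has four vertices and must contain two disjoint edges that simulate the two original edges. I would therefore return to the first construction. In it, the two disjoint edges of the paw are still $a_ib_i$ and $c_id_i$, and the only new edges are $a_ic_i$ and $b_ic_i$.

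The key steps are as follows. First, any rainbow matching of $G$ is a rainbow matching of $G'$, so the forward direction is immediate. Second, for the converse, take a rainbow matching $M'$ of $G'$ of size at least $p$. Every edge of $M'$ of the form $a_ic_i$ or $b_ic_i$ must be exchanged for an edge of color $i$ in $G$ without destroying the matching property or the size.

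This exchange is the main obstacle. Replacing $a_ic_i$ by $c_id_i$ fails if $d_i$ is matched, and replacing it by $a_ib_i$ fails if $b_i$ is matched. To handle this, I would use the specific structure of the gadgets from the proof of \cref{thm: rainbow matching NP-complete outerplanar}. The target $p$ there is tight: every gadget must contribute a prescribed number of edges. Choosing the labeling of which of the two edges plays the role of $a_ib_i$ (the triangle base) in each color class is the remaining freedom. I expect to set it so that a new edge $a_ic_i$ joins two different gadgets, or a vertex that is saturated in any solution of size $p$. A counting argument then shows that using a new edge lowers the maximum contribution of some gadget, so no matching of size $p$ uses one. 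Verifying this case by case against the variable and clause gadgets is where I expect the real work to lie. Finally, the construction is clearly polynomial, and \RM is in \NP, which completes the proof.
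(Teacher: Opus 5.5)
Your plan is in substance the paper's. You start from the specific instance $G_\phi, f_\phi$ of the $2K_2$ reduction, turn each class $\{a_ib_i, c_id_i\}$ into a paw by adding $a_ic_i$ and $b_ic_i$, and show that no rainbow matching of the target size $3(n+m)+2$ uses an added edge, so that \cref{thm: Hilfslemma zu outerplanar rainbow matchings} applies. With the labeling given below, your graph is exactly the paper's $G''_\phi$.

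However, the proposal stops at the step that carries the proof. You never fix which endpoint is the apex $c_i$, and ``I expect \ldots a counting argument'' is not an argument. This choice matters. Suppose the apex of the class containing $\{r_{j,1},r'_{j,1}\}$ is $r_{j,1}$, while the apex of the class containing $\{r_{j,3},r'_{j,3}\}$ (or $\{r_{j,1},r'_{j,3}\}$) is $r'_{j,3}$. Then $G_j - \{r_{j,1}, r'_{j,3}\}$ still has the perfect matching $\{r_{j,2}r_{j,3},\, r'_{j,1}r'_{j,2}\}$. So a clause gadget can absorb two added edges, and the counting you have in mind fails. As it stands, the converse direction is unproved.

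The missing piece is a consistent labeling plus a forcing-and-parity argument:
\begin{itemize}
\item \textbf{Labeling.} For each class with one edge in a variable gadget and one literal edge of $G_j$, take the variable-gadget edge as the base. Put the apex at $r_{j,1}$ for $\{r_{j,1},r'_{j,1}\}$ and $\{r_{j,1},r'_{j,3}\}$, and at $r_{j,3}$ for $\{r_{j,3},r'_{j,3}\}$. For a class $\{uv, s_1s_3\}$, take base $s_1s_3$ and apex $v$. For the two classes $\{s_1s_2, s_3s_4\}$, take base $s_1s_2$ and apex $s_3$.
\item \textbf{Perfectness.} The graph has $6(n+m)+4$ vertices, so a rainbow matching of size $3(n+m)+2$ is perfect.
\item \textbf{Forcing at $S$.} The vertex $s_4$ is adjacent only to $s_3$, and $s_2$ only to $s_1$ and $s_3$. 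Hence $s_3s_4$ and $s_1s_2$ are forced, and no added edge at $S$ is used.
\item \textbf{Parity at clause gadgets.} Every remaining added edge joins a variable gadget to $r_{j,1}$ or $r_{j,3}$ for some $j$. Since $G_j$ has six vertices, a perfect matching uses an even number of such edges at $G_j$, i.e.\ $0$ or $2$. If it uses $2$, they cover $r_{j,1}$ and $r_{j,3}$. Then $r_{j,2}$ is unmatched, because its only neighbours outside $S$ are $r_{j,1}$ and $r_{j,3}$.
\end{itemize}
Hence every rainbow matching of the target size lies in $G_\phi$. The forward direction is trivial because the colors of the edges of $G_\phi$ are unchanged.
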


Using \cref{thm: rainbow matching NP-complete outerplanar,thm: rainbow matching path length three,thm: rainbow matching triangle plus edge}, we then prove the hardness part of \cref{thm: dichotomy result rainbow matchings}, which is summarized in the following lemma.

\begin{restatable}{lemma}{rmhardness}
 \label{lem:hardness-part-for-dichotomy}
    Let $\mathcal{G}$ be a color-closed class of graphs, such that for any $\ell > 0$ there is some $G \in \mathcal{G}$, such that $G$ has at least $\ell$ color classes that contain one of $\{2K_2, P_4, \text{paw}\}$ as induced subgraph. Then \RM on $\mathcal{G}$ is \NP-hard.
\end{restatable}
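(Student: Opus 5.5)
Fix $H\in\{2K_2,P_4,\text{paw}\}$ such that for every $\ell$ some graph in $\mathcal{G}$ has at least $\ell$ color classes containing $H$ as an induced subgraph; one such $H$ exists by pigeonhole over the three options. We reduce from the $H$-version of \RM, which is \NP-hard by \cref{thm: rainbow matching NP-complete outerplanar}, \cref{thm: rainbow matching path length three} or \cref{thm: rainbow matching triangle plus edge}. Let $I$ be an instance in which every color class is a copy of $H$, with $k$ colors and target $p$.

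\textbf{Step 1: choose a host graph.} Pick $G\in\mathcal{G}$ with at least $k$ color classes $G_1,\dots,G_k$, each containing an induced copy of $H$. To keep the reduction polynomial, we would like $G$ to have size polynomial in $k$. The dichotomy is meant in the usual sense, where such a $G$ is available. We can also restrict to classes of bounded size and enumerate graphs of $\mathcal{G}$ of increasing order; we flag this as a subtle point to be stated carefully.

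\textbf{Step 2: build a color-equivalent instance.} Since $\mathcal{G}$ is color-closed, we may realize any graph whose non-trivial color classes are isomorphic to those of $G$ in any vertex arrangement. Construct $I'$ as follows.
\begin{itemize}
\item For each $i\le k$, take a fresh copy of $G_i$ and identify the vertices of its induced $H$ with the vertices of color class $i$ of $I$.
\item Place all other vertices of the copy of $G_i$ as new private vertices.
\item Add every remaining non-trivial class of $G$ as a vertex-disjoint fresh copy.
\end{itemize}
Trivial classes can be dropped, or added as isolated single edges if the equivalence requires them. Because the $H$-copy is induced, the edges of copy $i$ among original vertices are exactly the $H$-edges of color $i$ in $I$.

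The remaining task is to neutralize the extra edges of $G_i$ that touch private vertices. A rainbow matching may use a single edge of color $i$ incident to a private vertex instead of an $H$-edge. Its other endpoint may be an original vertex, which can only hurt or be exchanged. The danger is that a color $i$ edge entirely among private vertices gives a free extra edge. To handle this:
\begin{itemize}
\item Each fresh disjoint class contributes exactly $1$ to a maximum rainbow matching, so raise the target by their number $r$.
\item For each $i$, note that $G_i$ also contains edges on private vertices. Set the target $p'=p+r+\dots$, and argue by an exchange that the optimum of $I'$ equals $r$ plus the optimum of a modified instance.
\end{itemize}

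\textbf{Main obstacle.} The cleaner way, which we try first, is to prove a stronger claim: the maximum rainbow matching of $I'$ equals $\mathrm{OPT}(I)$ plus the number of color classes of $I'$ that are not among $1,\dots,k$. Color $i$ contributes at most one edge. If some color $i$ uses an edge touching a private vertex, we argue that the hardness instances from the earlier lemmas are robust. In those reductions, colors that may be ``freed'' correspond to gadget edges. For this we need that $\mathrm{OPT}(I)$ counts gadgets, so we modify the threshold accordingly.

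Honestly, the safer fix is to ensure that each color $i$ always has an edge available outside $V(I)$. Then in $I'$ every color $i$ can contribute one edge on private vertices, which makes the problem trivial, so this fix is wrong. Instead we embed so that every edge of $G_i$ touches $V(H)$ or lies on private vertices, and then penalize. The correct approach is likely to choose the induced $H$ together with its neighborhood structure, and we expect this to be the hardest part of the proof.
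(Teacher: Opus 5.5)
\textbf{There is a genuine gap: the proposal never resolves its own main obstacle.}

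Your embedding is the right one. You take a fresh copy of each $G_i$, glue its induced $H$ onto color class $i$ of the instance, and add disjoint copies of the remaining classes. You also correctly identify the danger. Edges of color $i$ that touch the private vertices of the copy of $G_i$, and especially edges lying entirely among private vertices, can give color $i$ a ``free'' edge unrelated to $I$.

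However, the proposal stops there. The candidate fixes are either rejected by you or left as speculation: relying on robustness of the specific 3SAT instances, or choosing the induced $H$ ``together with its neighborhood structure''. So nothing shows that $\mathrm{OPT}(I')$ is determined by $\mathrm{OPT}(I)$, and the reduction is not established. Your ``stronger claim'' that $\mathrm{OPT}(I')$ equals $\mathrm{OPT}(I)$ plus the number of extra classes is false for this construction in general, for exactly the reason you name. For example, if $G_i$ is $3K_2$, color $i$ can always use its third edge.

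\textbf{The missing idea.} Color-equivalence only constrains the \emph{non-trivial} color classes, so you may add arbitrarily many single-edge color classes without leaving $\mathcal{G}$. Use this as follows.
\begin{enumerate}
\item For every private vertex $v$, add a new leaf $x_v$ and an edge $vx_v$ with a brand-new color. Here ``private'' means any vertex of a copy of some $G_i$ not identified with a vertex of $I$, including all vertices of the disjoint copies. Let $X$ be the set of these pendant edges. The resulting graph is still color-equivalent to $G$.
\item Take any maximum rainbow matching of $I'$. For each private $v$, delete the matching edge covering $v$ (if any) and insert $vx_v$. Since $x_v$ is a leaf and its color is unique, this keeps the matching rainbow and never decreases its size. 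Hence some maximum rainbow matching contains $X$.
\item In that matching, every private vertex is saturated by $X$, so all other edges lie inside $V(I)$. By inducedness, as you observed, the edges of color $i$ there are exactly those of color class $i$ of $I$.
\item Conversely, $M\cup X$ is rainbow for every rainbow matching $M$ of $I$.
\end{enumerate}
Thus $\mathrm{OPT}(I')=\mathrm{OPT}(I)+|X|$, and the target becomes $p+|X|$. This is exactly the paper's argument. It makes your case analysis of private edges, and the separate count for the disjoint copies, unnecessary.

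Your concern about the size of the host graph $G$ is legitimate. The paper also simply takes $G$ as given, though, so that is not where your proposal falls short.
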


We prove \Cref{lem:hardness-part-for-dichotomy} by a reduction from \RM restricted to graphs in which every color class is either a $2K_2$, $P_4$ or a paw, which is \NP-hard by  \cref{thm: rainbow matching NP-complete outerplanar,thm: rainbow matching path length three,thm: rainbow matching triangle plus edge}.
By noticing that CM graphs are characterized by the absence of $\overline{P}_3$ as induced subgraph, it is easy to prove that any graph without isolated vertices that is not CM contains one of the graphs $\mathcal{F} = \{2K_2, P_4, \text{paw}\}$ as induced subgraph.

\begin{restatable}{lemma}{rmcolorclasses} \label{thm: rainbow matching color classes}
    Let $G$ be a graph without isolated vertices that is not complete multipartite. Then $G$ contains an induced subgraph isomorphic to one of $\{2K_2, P_4, \text{paw}\}$.
\end{restatable}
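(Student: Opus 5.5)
Since $G$ is not complete multipartite, it contains an induced $\overline{P}_3$ by the characterization recalled in the Notation paragraph. I will take such a copy on vertices $a,b,c$ with $ab \in E(G)$ and $c$ adjacent to neither $a$ nor $b$. The plan is to extend $\{a,b,c\}$ by one vertex to an induced copy of $2K_2$, $P_4$ or paw, using that $c$ is not isolated. Because $G$ has no isolated vertices, $c$ has a neighbour $d$, and $d \notin \{a,b\}$ since $ca, cb \notin E(G)$. I then look at the induced subgraph on $\{a,b,c,d\}$. Its edge set always contains $ab$ and $cd$, never $ac$ or $bc$, and the only undetermined pairs are $ad$ and $bd$.

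I will split into cases on $|N(d)\cap\{a,b\}|$.

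\textbf{Case 0.} If $d$ is adjacent to neither $a$ nor $b$, the edges are exactly $ab$ and $cd$, so we have an induced $2K_2$.

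\textbf{Case 1.} If $d$ is adjacent to exactly one of them, say $a$ by symmetry, the edges are $ab, ad, dc$. This is the path $b\,a\,d\,c$, an induced $P_4$.

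\textbf{Case 2.} If $d$ is adjacent to both, the edges are $ab, ad, bd, dc$. This is the triangle $abd$ with pendant edge $dc$, an induced paw.

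In all three cases we obtain one of the graphs in $\{2K_2, P_4, \text{paw}\}$ as an induced subgraph, which proves the lemma.

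The only real step is the starting point: the fact that CM graphs are exactly the $\overline{P}_3$-free graphs. The paper states this as known, so I would simply invoke it. If a justification were required, I would argue that in a $\overline{P}_3$-free graph non-adjacency is transitive. Indeed, if $xy \notin E$ and $yz \notin E$ but $xz \in E$, then $x,z,y$ induce $\overline{P}_3$. Non-adjacency (together with equality) is therefore an equivalence relation, and its classes are the parts of a complete multipartite partition. After that, the case analysis above is routine.
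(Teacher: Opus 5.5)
Your proposal is correct and follows essentially the same route as the paper: an induced $\overline{P}_3$ with the isolated vertex's neighbour $d \notin \{a,b\}$, and a case split on $|N(d)\cap\{a,b\}|$ giving $2K_2$, $P_4$ or paw. Your added justification of the $\overline{P}_3$-free characterization (non-adjacency together with equality is an equivalence relation whose classes are the parts) is also sound, though the paper simply cites that fact.
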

\begin{proof}
    Recall that complete multipartite graphs are exactly the graphs without $\overline{P}_3$ as induced subgraph. Therefore, any graph without isolated vertices on at most three vertices is complete multipartite. We may therefore assume that $G$ has at least four vertices and contains three vertices $u, v, w$ that induce~$\overline{P}_3$. Assume without loss of generality, that $uv$ is an edge of $G$ and neither $u$ nor $v$ is a neighbor of $w$. Since $G$ contains no isolated vertices, the vertex $w$ has a neighbor $x$ different from $u$ and $v$. The vertex $x$ may be adjacent to none, one, or both of $u$ and $v$. Then the subgraph of $G$ induced by $\{u, v, w, x\}$ is isomorphic to $2K_2$ in the first case, to $P_4$ in the second case and to a paw in the third case.
\end{proof}

As a consequence, graphs that are not CM contain as an induced subgraph one of the color class structures that make \RM\xspace \NP-hard (\cref{thm: rainbow matching NP-complete outerplanar,thm: rainbow matching path length three,thm: rainbow matching triangle plus edge}).
We use this result to differentiate between the polynomial and the \NP-hard cases.
It remains to show that \RM admits a polynomial-time algorithm on $\alpha$-CM-colored graphs for any constant integer~$\alpha \geq 0$. Intuitively, in this graph class, the number of occurrences of color classes containing an induced graph from $\mathcal{F}$ is bounded.

\begin{restatable}{lemma}{rmfrfree}
\label{thm: dichotomy result rainbow matchings poly cases}
    Let $\alpha \in \mathbb{N}$ be a constant. Then \textsc{Maximum Rainbow Matching} on $\alpha$-CM-colored graphs admits a polynomial-time algorithm.
\end{restatable}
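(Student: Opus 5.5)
We reduce \textsc{Maximum Rainbow Matching} on $\alpha$-CM-colored graphs to polynomially many instances of \textsc{Maximum Generalized Rainbow Matching} on strictly CM-colored graphs. These instances are solved by the algorithm behind \cref{thm:dichotomy-result-rainbow-l-u-matchings}, which the overview states for the maximization version. Along with the matching, that algorithm can also certify the maximum size via binary search over $p$.

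Let $G=(V,E;c)$ be $\alpha$-CM-colored, and let $B$ be the set of at most $\alpha$ colors whose color class is not complete multipartite. A rainbow matching $M$ contains at most one edge of each color in $B$. We therefore guess the set $S=M\cap\bigcup_{i\in B}E(G_i)$. This is a rainbow matching using only colors of $B$, with $|S|\le\alpha$, and there are at most $\prod_{i\in B}(|E(G_i)|+1)\le(|E|+1)^{\alpha}$ choices. Since $\alpha$ is a constant, this number is polynomial.

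For each guess $S$ that is a matching, we form $G_S$ as follows:
\begin{itemize}
\item delete all edges with a color in $B$;
\item delete the vertices $V(S)$ covered by $S$ together with their incident edges.
\end{itemize}
Each remaining color class is an edge-subgraph of a CM graph restricted to $V\setminus V(S)$. Edges are deleted only because an endpoint is removed, so what remains is the induced subgraph of $G_i$ on $V(G_i)\setminus V(S)$, with isolated vertices dropped. Induced subgraphs of $\overline{P}_3$-free graphs are $\overline{P}_3$-free, so every remaining color class is again complete multipartite (possibly empty, in which case it is simply absent). Hence $G_S$ is strictly CM-colored. We compute a maximum rainbow matching $M_S$ of $G_S$, which is the generalized problem with all $m_i=1$, and output the best $S\cup M_S$ over all guesses.

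For correctness, $S\cup M_S$ is a matching because $M_S$ avoids $V(S)$. It is rainbow because the colors of $S$ lie in $B$, the colors of $M_S$ lie outside $B$, and each part is rainbow on its own. Conversely, for an optimal $M$ with guess $S=M\cap\bigcup_{i\in B}E(G_i)$, the set $M\setminus S$ is a rainbow matching of $G_S$, so $|M_S|\ge|M\setminus S|$.

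The only delicate point is that the remaining color classes stay CM after the deletions; this holds by the hereditary property noted above. The real work is \cref{thm:dichotomy-result-rainbow-l-u-matchings}, which we assume.
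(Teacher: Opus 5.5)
Your proposal is correct and follows essentially the same route as the paper: enumerate the $O(|E|^\alpha)$ rainbow matchings $S$ on the non-CM color classes, delete those classes and the vertices covered by $S$, note that the remaining classes stay CM because CM graphs ($\overline{P}_3$-free graphs) are closed under taking induced subgraphs, and solve the rest with \cref{thm:dichotomy-result-rainbow-l-u-matchings}. Your exchange argument for optimality is more explicit than the paper's. The one step you leave implicit is that the set $B$ of non-CM color classes can be computed in polynomial time, which the paper notes and which holds because $\overline{P}_3$-freeness is checkable in polynomial time.
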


We prove this lemma in two steps. 
We consider the problem \textsc{Maximum Generalized Rainbow Matching} and show the following.

\theoremmaintwo*

The proof of \cref{thm:dichotomy-result-rainbow-l-u-matchings} is essentially a reduction to \textsc{Maximum $(l, u)$-Matching} that is inspired by~\cite{matching-complete-graph}.
We first subdivide every edge twice using two subdivision vertices.
Then, for each color class $G_i$ and each non-trivial part of $V(G_i)$, we introduce a new \emph{local} vertex. 
We connect every subdivision vertex that is adjacent to one vertex of the non-trivial part to this local  vertex.
We introduce one additional \emph{universal} vertex for each color class and connect all subdivision vertices to it.
For each newly introduced vertex $v$ we define a carefully chosen degree interval $[l(v), u(v)]$ and solve the corresponding instance of \textsc{Maximum $(l, u)$-Matching} in polynomial time~\cite{schrijver}.
If the resulting $(l, u)$-matching only uses \emph{compatible} pairs of subdivision edges, then we can easily obtain from the $(l, u)$-matching a generalized rainbow matching for the original instance, for which we can also prove optimality. 
However, if the solution contains incompatible pairs of subdivision edges, we first need an additional modification step, which heavily exploits the assumption that each color class is complete multipartite. 
Finally, we use this result for the case in which $\alpha$ color classes are not complete multipartite and all $m_i$ are equal to $1$, i.e., we can select at most one edge from each color class.
Since $\alpha$ is a constant, we can ``guess'' by enumeration a rainbow matching on these color classes that can be extended to a maximum rainbow matching of the input graph.

Using the lemmas above we may prove \cref{thm: dichotomy result rainbow matchings}.

\begin{proof}[Proof of \cref{thm: dichotomy result rainbow matchings}.]
    First, assume that $\mathcal{G}$ is a color-closed $\alpha$-CM-colored class of graphs. Then \textsc{(Maximum) Rainbow Matching} on $\mathcal{G}$ admits a polynomial-time algorithm by \Cref{thm: dichotomy result rainbow matchings poly cases}.
    Assume now that $\mathcal{G}$ is color-closed but not $\alpha$-CM-colored. 
    Then for each $\ell > 0$, there is some $G \in \mathcal{G}$, such that at least $\ell$ color classes are not CM. By \cref{thm: rainbow matching color classes}, every color class of an edge-colored graph that is not CM contains one of the graphs in $\mathcal{F}$ as induced subgraph. 
    We may therefore invoke \cref{lem:hardness-part-for-dichotomy} and conclude that \RM is \NP-hard on $\mathcal{G}$.
\end{proof}

\section{Dichotomy Theorem for Rainbow Matchings}
\label{sec:dichotomy}

In this section, we prove the lemmas used in the proof of  \cref{thm: dichotomy result rainbow matchings}. We first consider the hardness part in \cref{,sec:NP-hard,sec:hardness} and then the polynomial part in \cref{sec:polynomial:mm}.

\subsection{\NP-hard cases}
\label{sec:NP-hard}

In the following three subsections, we show that for each $C \in \mathcal{F} = \{2K_2, P_4, \text{paw}\}$, the problem \RM is \NP-hard on edge-colored graphs, where each color class is isomorphic to $C$. 
The three proofs are loosely based on the \NP-completeness result for rainbow connectivity in outerplanar graphs from \cite{diameter2}.

\subsubsection{Each color class is a matching of size two}
\label{sec:NP:atmosttwo}

We prove the following lemma by a polynomial reduction from the problem 3-occurrence 3SAT, which is known to be \NP-complete~\cite{diameter2} and defined as follows. 

\begin{tabular}{lp{0.75\linewidth}}
    \multicolumn{2}{l}{\textbf{3-occurrence 3SAT}} \\
     \textbf{Input:} & Boolean SAT formula $\phi$ in conjunctive normal form, such that each variable occurs at most three times in $\phi$ and each clause has exactly three literals. \\
     \textbf{Question:} & Is $\phi$ satisfiable?
\end{tabular}

\rmtwoedges*

Let $\phi$ be an instance of 3-occurrence 3SAT with $n$ variables $x_1, \ldots, x_n$ and $m$ clauses $C_1,\ldots, C_m$.
We construct from $\phi$ a graph $G_\phi$ and an edge-coloring $f_\phi$ of~$G_\phi$, as a corresponding instance of \RM. For each $1 \leq j \leq m$ (resp., $1 \leq i \leq n$) we construct a clause gadget $G_j$ (resp., variable gadget $X_i$). For the edge-coloring $f_\phi$, each color class will be a subgraph of a $2K_2$.

Let $1 \leq i \leq n$. The variable gadget $X_i$ is a cycle on the vertices $a_i, u_i, v_i, b_i, \bar{v}_i, \bar{u}_i$, 
see~\cref{fig: var gadget}. 
Let $1 \le j \le m$. The clause gadget $G_j$ is a cycle 
 on six vertices, namely on the vertices $r_{j,1}, r_{j,2}, r_{j,3}, r'_{j,3}, r'_{j,2}, r'_{j,1}$. 
We then add the edge $\{r_{j,1},r'_{j,3}\}$ to $G_j$.
The three edges $\{r_{j,1},r'_{j,1}\}$, $\{r_{j,1},r'_{j,3}\}$ and $\{r_{j,3},r'_{j,3}\}$ correspond to the three literals in the clause $C_j$, see~\cref{fig:clause gadget}. The graph $G_\phi$ is the disjoint union of the gadgets $X_1,X_2,\ldots, X_n$ and $G_1,G_2,\ldots, G_m$.
    The gadgets are not connected to each other.

\begin{figure}[t]
    \centering
    \begin{subfigure}[t]{0.47\textwidth}
        \centering
        \begin{tikzpicture}[small vertex/.append style={node distance=5em}]
            \node[smallvertex,label=above:$u_i$] (v1) {};
            \node[smallvertex,label=above:$v_i$,right of=v1] (v2) {};
            \node[smallvertex,label=right:$b_i$,below right of=v2] (v3) {};
            
            \node[smallvertex,label=below:$\bar v_i$,below left of=v3] (v4) {};
            
            \node[smallvertex,label=below:$\bar u_i$,left of=v4] (v5) {};
            \node[smallvertex,label=left:$a_i$,above left of=v5] (v6) {};

            \draw[edge] (v1) -- node[pos=0.5,label=above:$\bar c_{i, 2}$,yshift=-0.35em] {} (v2);
            \draw[edge] (v2) -- node[pos=0.75,label=above:$c_{i, 3}$,yshift=-0em] {} (v3);
            \draw[edge] (v3) -- node[pos=0.25,label=below:$\bar c_{i, 3}$,yshift=-0em] {} (v4);
            \draw[edge] (v4) -- node[pos=0.5,label=below:$c_{i, 2}$,yshift=0.35em] {} (v5);
            \draw[edge] (v5) -- node[pos=0.75,label=below:$\bar c_{i, 1}$,yshift=0em] {} (v6);
            \draw[edge] (v6) -- node[pos=0.25,label=above:$c_{i, 1}$,yshift=0em] {} (v1);
        \end{tikzpicture}
        \caption{Variable gadget $X_i$ for a variable $x_i$.}
        \label{fig: var gadget}
    \end{subfigure}
    \hfill
    \begin{subfigure}[t]{0.47\textwidth}
        \centering\begin{tikzpicture}[smallvertex/.append style={node distance=6em}]               
        \node[smallvertex,outer sep=0.125em,label=above:$r_{j,1}$] (rj1) {};
        \node[smallvertex,outer sep=0.125em,label=above:$r_{j,2}$,right of=rj1] (rj2) {};
        \node[smallvertex,outer sep=0.125em,label=above:$r_{j,3}$,right of=rj2] (rj3) {};
        \node[smallvertex,outer sep=0.125em,label=below:$r'_{j,1}$,below of=rj1] (sj1) {};
        \node[smallvertex,outer sep=0.125em,label=below:$r'_{j,2}$,right of=sj1] (sj2) {};
        \node[smallvertex,outer sep=0.125em,label=below:$r'_{j,3}$,right of=sj2] (sj3) {};

        \draw[edge,line width=1.1mm,line cap=round] (rj1) -- (rj2) -- (rj3);
        \draw[edge,line width=1.1mm,line cap=round] (sj1) -- (sj2) -- (sj3);
        \draw[edge] (rj1) to node[pos=0.5,label=left:$c_{2,1}$] {} (sj1);
        \draw[edge] (rj3) to node[pos=0.5,label=right:$c_{5,2}$] {} (sj3);
        \draw[edge] (rj1) to node[pos=0.66,label=above:$\bar c_{3,2}$,yshift=-0.25em] {} (sj3);
        
        \end{tikzpicture}
        \caption{For the clause $C_j=(\overline{x}_2 \lor x_3 \lor \overline{x}_5)$ it shows a clause gadget~$C_j$, where $\overline{x}_2$ is the first literal and both $x_3$ and $\overline{x}_5$ are the second literals.}
        \label{fig:clause gadget}
    \end{subfigure}
    \caption{Variable and clause gadgets of the graph $G_\phi$ constructed in~\cref{thm: rainbow matching NP-complete outerplanar}.}
    \label{fig: variable and clause gadget }
\end{figure}
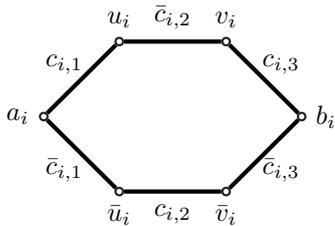
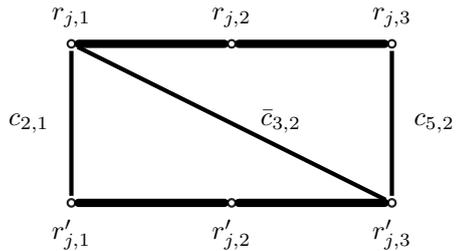

    We now define the coloring $f_\phi$. For $1 \leq i \leq n$, we color the six edges of the variable gadget $X_i$ with new and distinct colors $c_{i,1},\overline{c}_{i,2}, c_{i,3},\overline{c}_{i,3},c_{i,2}$ and $\overline{c}_{i,1}$ consecutively.
    Next, we color the edges of the clause gadgets $C_j$, $1 \leq j \leq m$, by considering the at most three occurrences of each variable $x_i$, $1 \leq i \leq n$, in a clause: 
    Assume that $C_j = l_{j,1} \lor l_{j,2} \lor l_{j,3}$ is the $d$-th clause of $\phi$ in which a literal of $x_i$ appears ($1 \leq d \leq 3$) and that $x_i$ appears in the $k$-th literal of $C_j$ ($k \in \{1, 2, 3\}$), that is, the variable of $l_{j,k}$ is $x_i$.

    If $k \in \{1, 3\}$, we let
    \begin{align} \label{eq: f_phi for k=1,3}
        f_\phi(r_{j,k}, r'_{j,k}) =
        \begin{cases} 
            \overline{c}_{i,d} & \text{if $l_{j,k}$ is positive, and} \\
            c_{i,d} & \text{if $l_{j,k}$ is negative.}
        \end{cases}
    \end{align}
    and if $k = 2$, we let
    \begin{align} \label{eq: f_phi for k=2}
        f_\phi(r_{j,k-1}, r'_{j,k+1}) =
        \begin{cases}
            \overline{c}_{i,d} & \text{if $l_{j,k}$ is positive, and} \\
            c_{i,d} & \text{if $l_{j,k}$ is negative,}
        \end{cases}
    \end{align}

    Let $U = \{ \{r_{j,1},r_{j,2}\},\{r_{j,2},r_{j,3}\} \mid 1 \leq j \leq m \} \cup \{\{r'_{j,1},r'_{j,2}\},\{r'_{j,2},r'_{j,3}\} \mid 1 \leq j \leq m \} $ be the set of remaining edges of $G_\phi$, which we have not yet colored, see the thick edges in \cref{fig:clause gadget}. 
    To each edge in $U$ we assign a unique new color.
    Finally, in order to ensure that each color class induces a $2K_2$, we add new vertices $S = \{s_1, s_2, s_3, s_4\}$ as well as two copies of $\{\{s_1, s_2\}, \{s_3, s_4\}\}$, each one with a unique new color. Then, for each color class in $G_\phi$ that consists of a single edge, we add one more edge $\{s_1, s_3\}$ of the same color.
    This completes the construction of the edge-coloring $f_\phi$ of $G_\phi$.

    Notice that $G_\phi, f_\phi$ can be constructed from $\phi$ in polynomial time and it has the following properties.
    \begin{enumerate}
        \item $G_\phi$ admits a perfect (not necessarily rainbow) matching of size $3n + 3m + 2$, since each variable and each clause gadget admits a perfect matching of size 3 and $G_\phi[S]$ admits a perfect matching of size 2. \label{itm:pm}
        \item Each color class of $G_\phi, f_\phi$ is a matching of size exactly two. Every color class either has an edge in $G_\phi[S]$ or it has one edge in a variable gadget and the other one in a clause gadget. Since the gadgets are vertex-disjoint, two edges having the same color are independent. \label{itm:colorclasses}
        \item Any maximum (rainbow) matching of a clause gadget $G_j, f_\phi$ includes exactly one of the edges $\{r_{j,1},r'_{j,1}\}$, $\{r_{j,1},r'_{j,3}\}$ or $\{r_{j,3},r'_{j,3}\}$. 
        The only way to include two of them is to include $\{r_{j,1},r'_{j,1}\}$ and $\{r_{j,3},r'_{j,3}\}$.
        But then the vertices $r_{j,2}$ and $r'_{j,2}$ are not matched, so the matching is not maximum.
        Recall that these three edges correspond to the three literals of the clause $C_j$. \label{itm:literals}
    \end{enumerate}

    We now define matchings that correspond to the values \textsc{True} and \textsc{False} of the variables~$x_i$. 
    For each variable gadget $X_i$, $1 \leq i \leq n$, we call the edges colored in $c_{i,1},c_{i,2},c_{i,3}$ (resp., $\overline{c}_{i,1},\overline{c}_{i,2},\overline{c}_{i,3}$) of $X_i$ the \emph{$i$-th positive matching} (resp., the \emph{$i$-th negative matching}), see~\cref{fig:pos matching,fig:neg matching}.
    The $i$-th positive matching corresponds to $x_i = \textsc{True}$, and the $i$-th negative matching corresponds to $x_i = \textsc{False}$. 
\cref{thm: rainbow matching NP-complete outerplanar} is a direct consequence of the following lemma and \cref{itm:colorclasses}.

\begin{figure}[t]
        \centering
        \begin{subfigure}[b]{0.45\textwidth}
            \centering
        \begin{tikzpicture}[small vertex/.append style={node distance=5em}]
            \node[smallvertex,label=above:$u_i$] (v1) {};
            \node[smallvertex,label=above:$v_i$,right of=v1] (v2) {};
            \node[smallvertex,label=right:$b_i$,below right of=v2] (v3) {};
            
            \node[smallvertex,label=below:$\bar v_i$,below left of=v3] (v4) {};
            
            \node[smallvertex,label=below:$\bar u_i$,left of=v4] (v5) {};
            \node[smallvertex,label=left:$a_i$,above left of=v5] (v6) {};

            \draw[edge] (v1) -- node[pos=0.5,yshift=-0.35em] {} (v2);
            \draw[edge] (v2) -- node[pos=0.75,label=above:$c_{i, 3}$,yshift=-0em] {} (v3);
            \draw[line width=0.75em,draw opacity=0.5,line cap=round,color=RoyalBlue] (v2) -- (v3);
            \draw[line width=0.75em,draw opacity=0.5,line cap=round,color=RoyalBlue] (v4) -- (v5);
            \draw[line width=0.75em,draw opacity=0.5,line cap=round,color=RoyalBlue] (v6) -- (v1);
            \draw[edge] (v3) -- node[pos=0.25,yshift=-0em] {} (v4);
            \draw[edge] (v4) -- node[pos=0.5,label=below:$c_{i, 2}$,yshift=0.35em] {} (v5);
            \draw[edge] (v5) -- node[pos=0.75,yshift=0em] {} (v6);
            \draw[edge] (v6) -- node[pos=0.25,label=above:$c_{i, 1}$,yshift=0em] {} (v1);
        \end{tikzpicture}
            \caption{The $i$-th positive matching}
            \label{fig:pos matching}
        \end{subfigure}
        \hfill
        \begin{subfigure}[b]{0.45\textwidth}
            \centering
        \begin{tikzpicture}[small vertex/.append style={node distance=5em}]
            \node[smallvertex,label=above:$u_i$] (v1) {};
            \node[smallvertex,label=above:$v_i$,right of=v1] (v2) {};
            \node[smallvertex,label=right:$b_i$,below right of=v2] (v3) {};
            
            \node[smallvertex,label=below:$\bar v_i$,below left of=v3] (v4) {};
            
            \node[smallvertex,label=below:$\bar u_i$,left of=v4] (v5) {};
            \node[smallvertex,label=left:$a_i$,above left of=v5] (v6) {};

            \draw[line width=0.75em,draw opacity=0.5,line cap=round,color=RoyalBlue] (v1) -- (v2);
            \draw[line width=0.75em,draw opacity=0.5,line cap=round,color=RoyalBlue] (v3) -- (v4);
            \draw[line width=0.75em,draw opacity=0.5,line cap=round,color=RoyalBlue] (v5) -- (v6);

            \draw[edge] (v1) -- node[pos=0.5,label=above:$\bar c_{i, 2}$,yshift=-0.35em] {} (v2);
            \draw[edge] (v2) -- node[pos=0.75,yshift=-0em] {} (v3);
            \draw[edge] (v3) -- node[pos=0.25,label=below:$\bar c_{i, 3}$,yshift=-0.25em] {} (v4);
            \draw[edge] (v4) -- node[pos=0.5,yshift=0.35em] {} (v5);
            \draw[edge] (v5) -- node[pos=0.75,label=below:$\bar c_{i, 1}$,yshift=-0.25em] {} (v6);
            \draw[edge] (v6) -- node[pos=0.25,yshift=0em] {} (v1);
        \end{tikzpicture}
            \caption{The $i$-th negative matching}
            \label{fig:neg matching}
        \end{subfigure}
        \caption{The $i$-th positive and negative matching for a variable gadget~$X_i$ of the graph $G_\phi, f_\phi$ constructed in~\cref{thm: rainbow matching NP-complete outerplanar}.}
        \label{fig:posivive and negative matching}
\end{figure}
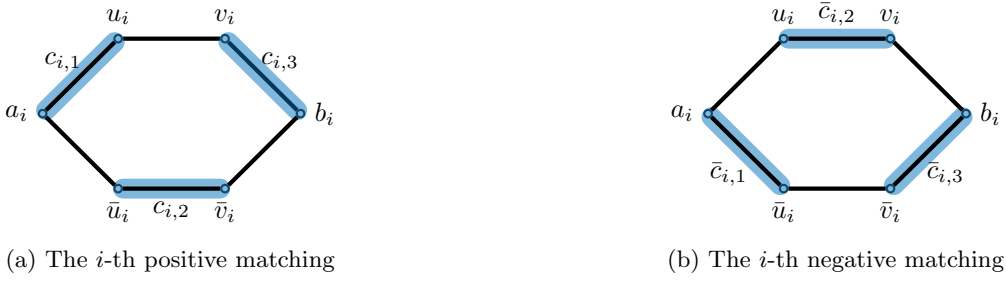

\begin{lemma}
    \label{thm: Hilfslemma zu outerplanar rainbow matchings}
    $G_\phi, f_\phi$ has a rainbow matching of size $3(n+m) + 2$ if and only if $\phi$ is satisfiable.
\end{lemma}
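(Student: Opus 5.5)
The proof rests on a counting observation: $3(n+m)+2$ is exactly the size of a perfect matching of $G_\phi$ (\cref{itm:pm}). So a rainbow matching of this size must be perfect, and it must restrict to a perfect matching on every connected component: each $X_i$, each $G_j$, and $G_\phi[S]$. The plan is first to list the perfect matchings of each component and then to check the colour constraints, which link components only through shared colours.

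\textbf{Perfect matchings of each component.}
\begin{itemize}
\item $X_i$ is a $6$-cycle, so it has exactly two perfect matchings: the $i$-th positive matching (colours $c_{i,1},c_{i,2},c_{i,3}$) and the $i$-th negative matching (colours $\overline c_{i,1},\overline c_{i,2},\overline c_{i,3}$).
\item $G_j$ has exactly three perfect matchings:
\begin{itemize}
\item $\{r_{j,1}r'_{j,1},\, r_{j,2}r_{j,3},\, r'_{j,2}r'_{j,3}\}$;
\item $\{r_{j,3}r'_{j,3},\, r_{j,1}r_{j,2},\, r'_{j,1}r'_{j,2}\}$;
\item $\{r_{j,1}r'_{j,3},\, r'_{j,1}r'_{j,2},\, r_{j,2}r_{j,3}\}$.
\end{itemize}
Each contains exactly one literal edge, consistent with \cref{itm:literals}, and otherwise only edges of $U$, whose colours are unique.
\item $G_\phi[S]$: a perfect matching must use one edge $s_1s_2$ and one edge $s_3s_4$, and never the edge $s_1s_3$. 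Rainbow perfect matchings exist: take the $s_1s_2$ copy of one of the two $S$-colours and the $s_3s_4$ copy of the other.
\end{itemize}

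\textbf{($\Rightarrow$)} Let $M$ be a rainbow matching of size $3(n+m)+2$. Set $x_i=\textsc{True}$ if $M\cap E(X_i)$ is the positive matching, and $x_i=\textsc{False}$ otherwise. Fix a clause $C_j$ and let $e$ be the unique literal edge of $G_j$ in $M$, corresponding to literal $l_{j,k}$ on variable $x_i$ in its $d$-th occurrence.
\begin{itemize}
\item If $l_{j,k}$ is positive, then $e$ has colour $\overline c_{i,d}$. This colour also appears on an edge of the negative matching of $X_i$. Since $M$ is rainbow, $X_i$ uses the positive matching, so $x_i=\textsc{True}$ and $l_{j,k}$ is satisfied.
\item If $l_{j,k}$ is negative, the symmetric argument gives $x_i=\textsc{False}$, and again $l_{j,k}$ is satisfied.
\end{itemize}
Hence every clause is satisfied.

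\textbf{($\Leftarrow$)} Given a satisfying assignment, build $M$ as follows.
\begin{itemize}
\item In each $X_i$, take the positive matching if $x_i=\textsc{True}$ and the negative matching otherwise.
\item In each $G_j$, choose one true literal and take the perfect matching containing its literal edge.
\item In $G_\phi[S]$, take the rainbow perfect matching described above.
\end{itemize}
This $M$ is perfect, so it has the required size. To check that it is rainbow, it suffices by \cref{itm:colorclasses} to show that no two chosen edges share a colour.
\begin{itemize}
\item $U$-colours and $S$-colours are distinct by construction, and the partner edge $s_1s_3$ is never used.
\item A literal-edge colour $\overline c_{i,d}$ (respectively $c_{i,d}$) belongs to a true positive (respectively negative) literal. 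Its partner edge in $X_i$ therefore lies in the opposite matching, which was not selected.
\item Distinct clauses carry distinct occurrence indices $(i,d)$, so two selected literal edges never share a colour.
\end{itemize}

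\textbf{Main obstacle.} The delicate step is the colour bookkeeping in both directions. One must confirm that each colour $c_{i,d}$ or $\overline c_{i,d}$ occurs on exactly one variable-gadget edge and on at most one clause edge. One must also confirm that the indexing in \eqref{eq: f_phi for k=1,3} and \eqref{eq: f_phi for k=2} places the clause colour in the matching opposite to the one representing the literal being true.
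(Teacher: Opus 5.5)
Your proof is correct and follows the paper's argument. A matching of size $3(n+m)+2$ is perfect, so each variable gadget carries its positive or negative matching, which defines the assignment. Each clause gadget then carries exactly one literal edge, and the colour of that edge forces the corresponding literal to be true; conversely, a satisfying assignment assembles into a perfect rainbow matching. Your version does the colour bookkeeping more carefully than the paper: you list the three perfect matchings of $G_j$ explicitly and check that the $s_1s_3$ partner edges are never used.
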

\begin{proof}
        Let $M$ be a perfect matching of $G_\phi$, which has size $3(n+m) + 2$ (see \cref{itm:pm}). For each variable gadget $X_i$, $1 \leq i \leq n$, the matching $M$ restricted to $X_i$ is either the $i$-th positive or the $i$-th negative matching of $X_i$.

        For $1 \leq i \leq n$, we let $x_i = \textsc{True}$ (resp., $x_i = \textsc{False}$) if $M$ restricted to $X_i$ is the $i$-th positive (resp., the $i$-th negative) matching to obtain an assignment $z$.
        It remains to show that if $M$ is rainbow then $z$ satisfies $\phi$. So assume that $M$ is rainbow. Since $M$ is perfect and by~\cref{itm:literals}, we have that $M$ contains exactly one of the edges 
        $\{r_{j,1},r'_{j,1}\}$, $\{r_{j,1},r'_{j,3}\}$ and $\{r_{j,3},r'_{j,3}\}$ of the clause gadget $G_j$ for $1\leq j \leq m$.

        We show that if $M$ contains the edge $e = \{r_{j,1}, r'_{j,1}\}$ then $z$ sets the first literal of the clause $C_j$ to \textsc{True}.
        By~\eqref{eq: f_phi for k=1,3}, the color $f_\phi(e)$ is $\overline{c}_{i,1}$ if the first literal of $C_j$ is positive and $c_{i,1}$ otherwise. Since each color appears at most once in $M$, in the first case, the matching $M$ restricted to the corresponding variable gadget $X_i$ is the $i$-th positive matching and in the latter case it is the $i$-th negative matching.
        The argument for the edges $\{r_{j,3}, r'_{j,3}\}$ and $\{r_{j,1}, r'_{j,3}\}$ is the same (for $\{r_{j,1}, r'_{j,3}\}$, the color is defined in~\eqref{eq: f_phi for k=2}). We conclude that $z$ satisfies~$\phi$.

        To prove the converse direction, let $z$ be a satisfying assignment of $\phi$. We construct from $z$ a rainbow matching $M'$ of $G_\phi, f_\phi$ of size $3(n+m) + 2$. First, we add two independent edges on $G_\phi[S]$ of different colors. Then, for $1 \leq i \leq n$, if $z(x_i) = \textsc{True}$ (resp., $z(x_i) = \textsc{False})$ we add to $M'$ the $i$-th positive (resp., $i$-th negative) matching.
        So far, the matching $M'$ is rainbow and of size $3n+2$.
        
        It remains to show that for each clause gadget $G_j$, $1\leq j\leq m$, we may add to $M'$ exactly one of the edges $\{ \{r_{j,1},r'_{j,1}\},\{r_{j,1},r'_{j,3}\},\{r_{j,3},r'_{j,3}\}\}$ without using a color twice. 
        Let $j \in \{1, 2, \ldots, m\}$. Since $z$ satisfies $\phi$, there is a literal $l$ of $C_j$ that evaluates to \textsc{True} according to~$z$. Let $x_i$ be the variable of $l$ and assume that $C_j$ is the $k$-th clause in which $x$ appears. Furthermore, assume that $l$ is the $y$-th literal of $C_j$. If $l$ is positive, then $f_\phi(r_{j,k}, r'_{j,k}) = \overline{c}_{i,k}$ if $k \in \{1, 3\}$ and $f_\phi(r_{j,1}, r'_{j,3}) = \overline{c}_{i,k}$ if $k=2$, by \eqref{eq: f_phi for k=1,3} and \eqref{eq: f_phi for k=2}. Furthermore, $z(x_i) = \textsc{True}$, so $M'$ restricted to $X_i$ is the $i$-th positive matching. We may therefore add one of the edges $\{ \{r_{j,1},r'_{j,1}\},\{r_{j,1},r'_{j,3}\},\{r_{j,3},r'_{j,3}\}\}$ to $M'$ without using a color twice. We then add two edges of $U$ to $M'$ to obtain a perfect matching on $G_j$. The same argument holds if $l$ is negative and hence, we have that $M'$ restricted to $X_i$ is the $i$-th negative matching. 
        We conclude that, after treating each clause of $\phi$, $M'$ is a perfect rainbow matching of $G_\phi, f_\phi$ of size $3(n+m)+2$.
\end{proof}

Since $G'_\phi$ is an outerplanar graph (i.e., $G'_\phi$ admits a crossing-free drawing in the plane, such that each vertex is on the outer face), we obtain the following corollary.

\begin{corollary}
    \label{cor:outerplanar}
    \RM restricted to outerplanar graphs, where each color class is isomorphic to $2K_2$, is \NP-hard.
\end{corollary}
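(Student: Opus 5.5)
The corollary should follow directly from the reduction behind \cref{thm: rainbow matching NP-complete outerplanar} together with \cref{thm: Hilfslemma zu outerplanar rainbow matchings}. That reduction already produces, in polynomial time, an instance $G_\phi, f_\phi$ in which every color class is a $2K_2$ (\cref{itm:colorclasses}). It also guarantees that $G_\phi, f_\phi$ has a rainbow matching of size $3(n+m)+2$ if and only if $\phi$ is satisfiable. Since 3-occurrence 3SAT is \NP-complete, the only thing left to check is that $G_\phi$ is outerplanar.

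For this I would use that a graph is outerplanar if and only if each of its connected components is outerplanar, since the components can be drawn side by side with every vertex on the outer face. The components of $G_\phi$ fall into three types.
\begin{enumerate}
\item Each variable gadget $X_i$ is a $6$-cycle, which is clearly outerplanar.
\item Each clause gadget $G_j$ is the $6$-cycle $r_{j,1} r_{j,2} r_{j,3} r'_{j,3} r'_{j,2} r'_{j,1}$ plus the single chord $\{r_{j,1}, r'_{j,3}\}$. Drawing the cycle convexly and the chord inside it gives a crossing-free drawing with all vertices on the outer face.
\item The set $S$ spans the edges $\{s_1,s_2\}$ and $\{s_3,s_4\}$, each twice as parallel edges, together with possibly many parallel copies of $\{s_1,s_3\}$. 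As a simple graph this is the path $s_2 s_1 s_3 s_4$. Parallel edges can be drawn as nested thin lenses alongside one another, so no crossings arise and every vertex stays on the outer face.
\end{enumerate}
The gadgets are pairwise vertex-disjoint, and the only links between them come from the coloring. Hence $G_\phi$ is outerplanar.

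The only point needing any care is the multigraph part on $S$, namely whether outerplanarity is meant to allow parallel edges. If one insists on simple graphs, I would instead replace each extra $\{s_1,s_3\}$ edge by a fresh disjoint copy: for each color class that would otherwise be a single edge, add a new edge on two new vertices of that color. This keeps every color class a $2K_2$ and keeps the graph a disjoint union of outerplanar pieces. \cref{itm:pm} still holds, since these extra edges may simply be left unmatched and the target size $3(n+m)+2$ is unchanged. The equivalence in \cref{thm: Hilfslemma zu outerplanar rainbow matchings} is also unaffected, because its forward direction only uses a matching of that size restricted to the gadgets.
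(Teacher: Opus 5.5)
Your main argument is correct and is the paper's own. The corollary is just the reduction behind \cref{thm: rainbow matching NP-complete outerplanar} plus the observation that the gadgets are outerplanar and interact only through the coloring. (The paper's sentence says $G'_\phi$, but $G_\phi$ is meant, as you use it.) Since the paper explicitly allows parallel edges, your nested-lens remark about $S$ settles the only subtle point.

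The simple-graph fallback in your last paragraph, however, is false and should be dropped. Once the copies of $\{s_1,s_3\}$ are replaced by fresh vertex-disjoint edges, $3(n+m)+2$ is no longer half the number of vertices. So a rainbow matching of that size need not be perfect on the gadgets. But perfection is exactly what the forward direction of \cref{thm: Hilfslemma zu outerplanar rainbow matchings} uses to get a positive or negative matching on each $X_i$ and exactly one literal edge in each $G_j$.

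Concretely, for \emph{every} $\phi$, build the following rainbow matching:
\begin{itemize}
\item the positive matching on each $X_i$;
\item two differently colored edges on $S$;
\item in each $G_j$, the edges $\{r_{j,1},r_{j,2}\}$ and $\{r'_{j,1},r'_{j,2}\}$, together with the fresh copies of the colors of $\{r_{j,2},r_{j,3}\}$ and $\{r'_{j,2},r'_{j,3}\}$.
\end{itemize}
These $U$-colors are unique, so nothing else uses them. The result has size $3n+4m+2>3(n+m)+2$, so the modified instance is always a yes-instance. The modification also leaves the doubled edges $\{s_1,s_2\}$ and $\{s_3,s_4\}$ in place, so it would not produce a simple graph anyway.

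If you do want simple graphs, the partner edges must be edges that no matching of the target size $|V|/2$ can use. One option is the following extra component:
\begin{itemize}
\item take a triangulated polygon with enough corners and attach a pendant leaf at every corner;
\item give each pendant edge its own color, paired with a polygon side not incident to it;
\item use the diagonals as partners for the single-edge classes.
\end{itemize}
The leaves force every pendant edge into each perfect matching, so no side or diagonal is ever used.
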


\subsubsection{Each color class is a path on four vertices}
\label{sec:NP:atmostthree}

We adapt the reduction from \cref{sec:NP:atmosttwo} in order to show the following.

\rmpaththree*

To prove \cref{thm: rainbow matching path length three}, we first construct from a 3-occurrence 3SAT formula $\phi$ the graph $G_\phi, f_\phi$ as described in \cref{sec:NP:atmosttwo}. 
Recall that any color class of $G_\phi, f_\phi$ is a matching of size two. 
For each color class, we add an edge of the same color to form a path on  three edges.
We show that the edges we added do not occur in any maximum (rainbow) matching and may then apply~\cref{thm: Hilfslemma zu outerplanar rainbow matchings}.

We now describe the construction. Let $\phi$ be a 3-occurrence 3SAT formula on $n$ variables $x_1,\ldots, x_n$ and $m$ clauses $C_1, \ldots, C_m$. We construct, in polynomial time, the edge-colored graph $G_\phi, f_\phi$ as described in~\cref{sec:NP:atmosttwo}.
For each clause gadget $G_j$, $1\leq j \leq m$, we connect the edges $\{r_{j,1},r'_{j,1}\}$, $\{r_{j,1},r'_{j,3}\}$ and $\{r_{j,3},r'_{j,3}\}$ to the edges colored with the same color in a variable gadget. 
For each edge in a variable gadget $X_i$, $1\leq i\leq n$, we add edges to $G_\phi, f_\phi$ as follows: 
\begin{itemize}
    \item If the edge $\{a_i, \overline{u}_i\}$ (resp., $\{a_i, u_i\}$) has the same color as either $\{r_{j,1},r'_{j,1}\}$ or $\{r_{j,1},r'_{j,3}\}$, then the edge $\{\overline{u}_i,r_{j,1}\}$ (resp., $\{u_i,r_{j,1}\}$) is added to the graph with the same color.

    \item If the edge $\{a_i, \overline{u}_i\}$ (resp., $\{a_i, u_i\}$) has the same color as $\{r_{j,3},r'_{j,3}\}$, then the edge $\{\overline{u}_i,r_{j,3}\}$ (resp., $\{u_i,r_{j,3}\}$) is added to the graph with the same color.

    \item If the edge $\{\overline{u}_i, \overline{v}_i\}$ (resp., $\{u_i, v_i\}$) has the same color as either $\{r_{j,1},r'_{j,1}\}$ or $\{r_{j,1},r'_{j,3}\}$, then the edge $\{\overline{v}_i,r_{j,1}\}$ (resp., $\{v_i,r_{j,1}\}$) is added to the graph with the same color.

    \item If the edge $\{\overline{u}_i, \overline{v}_i\}$ (resp., $\{u_i, v_i\}$) has the same color as $\{r_{j,3},r'_{j,3}\}$, then the edge $\{\overline{v}_i,r_{j,3}\}$ (resp., $\{v_i,r_{j,3}\}$) is added to the graph with the same color.

    \item If the edge $\{\overline{v}_i, b_i\}$ (resp., $\{v_i, b_i\}$) has the same color as either $\{r_{j,1},r'_{j,1}\}$ or $\{r_{j,1},r'_{j,3}\}$, then the edge $\{b_i,r_{j,1}\}$ (resp., $\{b_i,r_{j,1}\}$) is added to the graph with the same color.

    \item If the edge $\{\overline{v}_i, b_i\}$ (resp., $\{v_i, b_i\}$) has the same color as $\{r_{j,3},r'_{j,3}\}$, then the edge $\{b_i,r_{j,3}\}$ (resp., $\{b_i,r_{j,3}\}$) is added to the graph with the same color.
    \item For each of the two color classes $\{\{s_1, s_2\}, \{s_3, s_4\}\}$, we add the edge $\{s_2, s_3\}$ of the same color.
    \item For each color class of the form $\{u, v\}, \{s_1, s_3\}$, where $u, v \in V(G_\phi)$, we add the edge $\{v, s_1\}$ of the same color.
\end{itemize}

Let $G'_\phi, f'_\phi$ be the resulting graph, see~\Cref{fig:graph gphi path length three} for an illustration, and let $U' = E(G'_\phi) \setminus E(G_\phi)$ be the edges added as described above. 
Notice that each color class of $G'_\phi, f'_\phi$ is a $P_4$. 
Now, \cref{thm: rainbow matching path length three} follows from the next lemma and \cref{thm: Hilfslemma zu outerplanar rainbow matchings}: Since we may delete the edges $U'$ from $G'_\phi$ without altering the optimal rainbow matchings, it suffices to consider $G_\phi, f_\phi$ and we may invoke \cref{thm: Hilfslemma zu outerplanar rainbow matchings}.

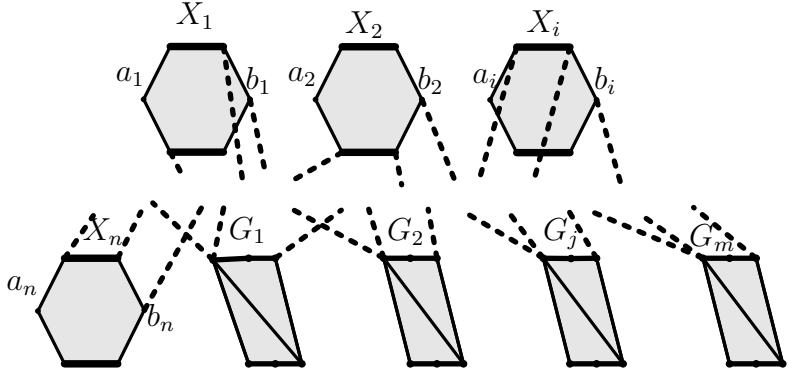
\begin{figure}[t]
    \centering
    \begin{tikzpicture}[line cap=round,line join=round,>=triangle 45,x=1.0cm,y=1.0cm, scale=0.35]
    \tikzstyle{every node}=[font=\large]
    \fill[line width=2.pt,fill=black,fill opacity=0.10000000149011612] (6.973852935290999,12.01973552365099) -- (7.973852935290999,14.01973552365099) -- (9.973852935291,14.01973552365099) -- (10.973852935291,12.01973552365099) -- (9.973852935291,10.01973552365099) -- (7.973852935290999,10.01973552365099) -- cycle;
    \fill[line width=2.8pt,fill=black,fill opacity=0.10000000149011612] (13.436896919503397,12.011591519573805) -- (14.436896919503397,14.011591519573805) -- (16.436896919503393,14.011591519573805) -- (17.436896919503393,12.011591519573805) -- (16.436896919503393,10.011591519573805) -- (14.436896919503397,10.011591519573805) -- cycle;
    \fill[line width=2.pt,fill=black,fill opacity=0.10000000149011612] (20.,12.) -- (21.,14.) -- (23.,14.) -- (24.,12.) -- (23.,10.) -- (21.,10.) -- cycle;
    \fill[line width=2.pt,fill=black,fill opacity=0.10000000149011612] (3.,4.) -- (4.,6.) -- (6.,6.) -- (7.,4.034401754337004) -- (6.,2.) -- (4.,2.) -- cycle;
    \fill[line width=2.pt,fill=black,fill opacity=0.10000000149011612] (9.58527674863891,5.932225525998697) -- (10.92414911864543,6.) -- (11.92414911864543,6.) -- (12.92414911864543,2.) -- (11.92414911864543,2.) -- (10.92414911864543,2.) -- cycle;
    \fill[line width=2.pt,fill=black,fill opacity=0.10000000149011612] (16.,6.) -- (17.,6.) -- (18.,6.) -- (19.,2.) -- (18.,2.) -- (17.,2.) -- cycle;
    \fill[line width=2.pt,fill=black,fill opacity=0.10000000149011612] (22.,6.) -- (23.04211595780728,5.989291962100647) -- (24.,6.) -- (25.,2.) -- (24.,2.) -- (23.,2.) -- cycle;
    \fill[line width=2.pt,fill=black,fill opacity=0.10000000149011612] (28.,6.) -- (29.,6.) -- (30.,6.) -- (31.,2.) -- (30.,2.) -- (29.,2.) -- cycle;
    \draw [line width=1.2pt] (6.973852935290999,12.01973552365099)-- (7.973852935290999,14.01973552365099);
    \draw [line width=2.8pt] (7.973852935290999,14.01973552365099)-- (9.973852935291,14.01973552365099);
    \draw [line width=1.2pt] (9.973852935291,14.01973552365099)-- (10.973852935291,12.01973552365099);
    \draw [line width=1.2pt] (10.973852935291,12.01973552365099)-- (9.973852935291,10.01973552365099);
    \draw [line width=2.8pt] (9.973852935291,10.01973552365099)-- (7.973852935290999,10.01973552365099);
    \draw [line width=1.2pt] (7.973852935290999,10.01973552365099)-- (6.973852935290999,12.01973552365099);
    \draw [line width=1.2pt] (13.436896919503397,12.011591519573805)-- (14.436896919503397,14.011591519573805);
    \draw [line width=2.8pt] (14.436896919503397,14.011591519573805)-- (16.436896919503393,14.011591519573805);
    \draw [line width=1.2pt] (16.436896919503393,14.011591519573805)-- (17.436896919503393,12.011591519573805);
    \draw [line width=1.2pt] (17.436896919503393,12.011591519573805)-- (16.436896919503393,10.011591519573805);
    \draw [line width=2.8pt] (16.436896919503393,10.011591519573805)-- (14.436896919503397,10.011591519573805);
    \draw [line width=1.2pt] (14.436896919503397,10.011591519573805)-- (13.436896919503397,12.011591519573805);
    \draw [line width=1.2pt] (20.,12.)-- (21.,14.);
    \draw [line width=2.8pt] (21.,14.)-- (23.,14.);
    \draw [line width=1.2pt] (23.,14.)-- (24.,12.);
    \draw [line width=1.2pt] (24.,12.)-- (23.,10.);
    \draw [line width=2.8pt] (23.,10.)-- (21.,10.);
    \draw [line width=1.2pt] (21.,10.)-- (20.,12.);
    \draw [line width=1.2pt] (3.,4.)-- (4.,6.);
    \draw [line width=2.8pt] (4.,6.)-- (6.,6.);
    \draw [line width=1.2pt] (6.,6.)-- (7.,4.034401754337004);
    \draw [line width=1.2pt] (7.,4.034401754337004)-- (6.,2.);
    \draw [line width=2.8pt] (6.,2.)-- (4.,2.);
    \draw [line width=1.2pt] (4.,2.)-- (3.,4.);
    \draw [line width=2.pt] (9.58527674863891,5.932225525998697)-- (10.92414911864543,6.);
    \draw [line width=2.pt] (10.92414911864543,6.)-- (11.92414911864543,6.);
    \draw [line width=1.2pt] (11.92414911864543,6.)-- (12.92414911864543,2.);
    \draw [line width=2.pt] (12.92414911864543,2.)-- (11.92414911864543,2.);
    \draw [line width=2.pt] (11.92414911864543,2.)-- (10.92414911864543,2.);
    \draw [line width=1.2pt] (10.92414911864543,2.)-- (9.58527674863891,5.932225525998697);
    \draw [line width=2.pt] (16.,6.)-- (17.,6.);
    \draw [line width=2.pt] (17.,6.)-- (18.,6.);
    \draw [line width=1.2pt] (18.,6.)-- (19.,2.);
    \draw [line width=2.pt] (19.,2.)-- (18.,2.);
    \draw [line width=2.pt] (18.,2.)-- (17.,2.);
    \draw [line width=1.2pt] (17.,2.)-- (16.,6.);
    \draw [line width=2.pt] (22.,6.)-- (23.04211595780728,5.989291962100647);
    \draw [line width=2.pt] (23.04211595780728,5.989291962100647)-- (24.,6.);
    \draw [line width=1.2pt] (24.,6.)-- (25.,2.);
    \draw [line width=2.pt] (25.,2.)-- (24.,2.);
    \draw [line width=2.pt] (24.,2.)-- (23.,2.);
    \draw [line width=1.2pt] (23.,2.)-- (22.,6.);
    \draw [line width=2.pt] (28.,6.)-- (29.,6.);
    \draw [line width=2.pt] (29.,6.)-- (30.,6.);
    \draw [line width=1.2pt] (30.,6.)-- (31.,2.);
    \draw [line width=2.pt] (31.,2.)-- (30.,2.);
    \draw [line width=2.pt] (30.,2.)-- (29.,2.);
    \draw [line width=1.2pt] (29.,2.)-- (28.,6.);
    \draw [line width=1.2pt] (9.58527674863891,5.932225525998697)-- (10.92414911864543,2.);
    \draw [line width=1.2pt] (9.58527674863891,5.932225525998697)-- (12.92414911864543,2.);
    \draw [line width=1.2pt] (11.92414911864543,6.)-- (12.92414911864543,2.);
    \draw [line width=1.2pt] (16.,6.)-- (17.,2.);
    \draw [line width=1.2pt] (16.,6.)-- (19.,2.);
    \draw [line width=1.2pt] (18.,6.)-- (19.,2.);
    \draw [line width=1.2pt] (22.,6.)-- (23.,2.);
    \draw [line width=1.2pt] (22.,6.)-- (25.,2.);
    \draw [line width=1.2pt] (24.,6.)-- (25.,2.);
    \draw [line width=1.2pt] (28.,6.)-- (29.,2.);
    \draw [line width=1.2pt] (28.,6.)-- (31.,2.);
    \draw [line width=1.2pt] (30.,6.)-- (31.,2.);
    \draw [line width=2.pt,loosely dotted] (4.,6.)-- (5.081043043159457,7.6135604414542675);
    \draw [line width=2.pt, loosely dotted] (6.,6.)-- (6.8699342686837195,7.751167458802287);
    \draw [line width=2.pt, loosely dotted] (28.,6.)-- (25.61889038235147,7.682363950128278);
    \draw [line width=2.pt, loosely dotted] (30.,6.)-- (27.648593888234767,7.923176230487312);
    \draw [line width=2.pt, loosely dotted] (7.973852935290999,10.01973552365099)-- (8.452414968185952,9.092835877945479);
    \draw [line width=2.pt, loosely dotted] (10.973852935291,12.01973552365099)-- (11.582974612853413,9.092835877945479);
    \draw [line width=2.pt, loosely dotted] (9.973852935291,14.01973552365099)-- (10.68852900009128,8.989630614934464);
    \draw [line width=2.pt, loosely dotted] (14.436896919503397,10.011591519573805)-- (12.64942899730057,8.989630614934464);
    \draw [line width=2.pt, loosely dotted] (17.436896919503393,12.011591519573805)-- (18.66973600627645,8.95522886059746);
    \draw [line width=2.pt, loosely dotted] (16.436896919503393,10.011591519573805)-- (16.67443425473016,8.783220088912435);
    \draw [line width=2.pt, loosely dotted] (9.58527674863891,5.932225525998697)-- (7.317157075064789,8.095185002172336);
    \draw [line width=2.pt, loosely dotted] (9.58527674863891,5.932225525998697)-- (10.,8.);
    \draw [line width=2.pt, loosely dotted] (11.92414911864543,6.)-- (14.403918468487833,7.819970967476297);
    \draw [line width=2.pt, loosely dotted] (16.,6.)-- (12.305411453930523,7.991979739161322);
    \draw [line width=2.pt, loosely dotted] (16.,6.)-- (15.435971098597983,7.854372721813302);
    \draw [line width=2.pt, loosely dotted] (18.,6.)-- (17.672085130503312,7.854372721813302);
    \draw [line width=2.pt, loosely dotted] (7.,4.034401754337004)-- (9.174851809263062,7.957577984824317);
    \draw [line width=2.pt, loosely dotted] (22.,6.)-- (18.910548286635493,7.819970967476297);
    \draw [line width=2.pt, loosely dotted] (22.,6.)-- (20.665037757822752,7.785569213139293);
    \draw [line width=2.pt, loosely dotted] (24.,6.)-- (22.90115178972808,7.888774476150307);
    \draw [line width=2.pt, loosely dotted] (21.,14.)-- (19.6329851277126,9.127237632282485);
    \draw [line width=2.pt, loosely dotted] (23.,14.)-- (21.628286879258894,8.88642535192345);
    \draw [line width=2.pt, loosely dotted] (24.,12.)-- (24.896453541274376,8.88642535192345);
    \draw [line width=2.pt, loosely dotted] (28.,6.)-- (23.933204419838223,7.716765704465282);
    \begin{scriptsize}
    \draw [fill=black] (6.973852935290999,12.01973552365099) circle (2.5pt);
    \draw[color=black] (6.47431409380816,12.877028855016022) node {$a_1$};
    \draw [fill=black] (7.973852935290999,14.01973552365099) circle (2.5pt);
    \draw [fill=black] (9.973852935291,14.01973552365099) circle (2.5pt);
    \draw [fill=black] (10.973852935291,12.01973552365099) circle (2.5pt);
    \draw[color=black] (11.290559700988867,12.739421837668003) node {$b_1$};
    \draw [fill=black] (9.973852935291,10.01973552365099) circle (2.5pt);
    \draw [fill=black] (7.973852935290999,10.01973552365099) circle (2.5pt);
    \draw[color=black] (8.916838651735517,15.181946395595352) node {$X_1$};
    \draw [fill=black] (13.436896919503397,12.011591519573805) circle (2.5pt);
    \draw[color=black] (12.941843909165108,12.877028855016022) node {$a_2$};
    \draw [fill=black] (14.436896919503397,14.011591519573805) circle (2.5pt);
    \draw [fill=black] (16.436896919503393,14.011591519573805) circle (2.5pt);
    \draw [fill=black] (17.436896919503393,12.011591519573805) circle (2.5pt);
    \draw[color=black] (17.758089516345816,12.739421837668003) node {$b_2$};
    \draw [fill=black] (16.436896919503393,10.011591519573805) circle (2.5pt);
    \draw [fill=black] (14.436896919503397,10.011591519573805) circle (2.5pt);
    \draw[color=black] (15.281163204081452,14.837928852225302) node {$X_2$};
    \draw [fill=black] (20.,12.) circle (2.5pt);
    \draw[color=black] (19.822194776566114,12.808225346342011) node {$a_i$};
    \draw [fill=black] (21.,14.) circle (2.5pt);
    \draw [fill=black] (23.,14.) circle (2.5pt);
    \draw [fill=black] (24.,12.) circle (2.5pt);
    \draw[color=black] (24.328824594713776,12.739421837668003) node {$b_i$};
    \draw [fill=black] (23.,10.) circle (2.5pt);
    \draw [fill=black] (21.,10.) circle (2.5pt);
    \draw[color=black] (22.023907054134437,14.906732360899312) node {$X_i$};
    \draw [fill=black] (4.,6.) circle (2.5pt);
    \draw [fill=black] (6.,6.) circle (2.5pt);
    \draw [fill=black] (7.,4.034401754337004) circle (2.5pt);
    \draw[color=black] (7.634357952233286,3.758214831482858) node {$b_n$};
    \draw [fill=black] (6.,2.) circle (2.5pt);
    \draw [fill=black] (4.,2.) circle (2.5pt);
    \draw [fill=black] (3.,4.) circle (2.5pt);
    \draw[color=black] (2.414907082041566,4.964625357504888) node {$a_n$};
    \draw[color=black] (5.463842165990953,6.99432886338818) node {$X_n$};
    \draw [fill=black] (9.58527674863891,5.932225525998697) circle (3.5pt);
    \draw [fill=black] (10.92414911864543,6.) circle (3.5pt);
    \draw [fill=black] (11.92414911864543,6.) circle (3.5pt);
    \draw [fill=black] (12.92414911864543,2.) circle (3.5pt);
    \draw [fill=black] (11.92414911864543,2.) circle (3.5pt);
    \draw [fill=black] (10.92414911864543,2.) circle (3.5pt);
    \draw[color=black] (10.877738648944806,7.063132372062189) node {$G_1$};
    \draw [fill=black] (16.,6.) circle (3.5pt);
    \draw [fill=black] (17.,6.) circle (3.5pt);
    \draw [fill=black] (18.,6.) circle (3.5pt);
    \draw [fill=black] (19.,2.) circle (3.5pt);
    \draw [fill=black] (18.,2.) circle (3.5pt);
    \draw [fill=black] (17.,2.) circle (3.5pt);
    \draw[color=black] (16.829242149246678,6.99432886338818) node {$G_2$};
    \draw [fill=black] (22.,6.) circle (3.5pt);
    \draw [fill=black] (23.04211595780728,5.989291962100647) circle (3.5pt);
    \draw [fill=black] (24.,6.) circle (3.5pt);
    \draw [fill=black] (25.,2.) circle (3.5pt);
    \draw [fill=black] (24.,2.) circle (3.5pt);
    \draw [fill=black] (23.,2.) circle (3.5pt);
    \draw[color=black] (22.677540386537533,6.9255253547141695) node {$G_j$};
    \draw [fill=black] (28.,6.) circle (3.5pt);
    \draw [fill=black] (29.,6.) circle (3.5pt);
    \draw [fill=black] (30.,6.) circle (3.5pt);
    \draw [fill=black] (31.,2.) circle (3.5pt);
    \draw [fill=black] (30.,2.) circle (3.5pt);
    \draw [fill=black] (29.,2.) circle (3.5pt);
    \draw[color=black] (28.353829852143367,6.787918337366149) node {$G_m$};
    \end{scriptsize}
    \end{tikzpicture}
    \caption{Illustration of the graph $G'_\phi$.}
    \label{fig:graph gphi path length three}
\end{figure}

\begin{lemma}
    \label{lem:G'max}
    No maximum matching of $G'_\phi$ contains an edge from $U'$.
\end{lemma}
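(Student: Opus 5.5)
The plan is to exploit that $G'_\phi$ has the same vertex set as $G_\phi$: the construction of $G'_\phi$ only adds edges. By \cref{itm:pm}, $G_\phi$, and hence $G'_\phi$, has a perfect matching. So every maximum matching $M$ of $G'_\phi$ is perfect, and it suffices to show that no perfect matching of $G'_\phi$ uses an edge of $U'$. I would argue by local counting around two places: the set $S=\{s_1,s_2,s_3,s_4\}$ and the clause gadgets.

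\emph{Step 1: the vertices of $S$.} In $G'_\phi$ the neighbourhoods are as follows. The vertex $s_4$ is adjacent only to $s_3$. The vertex $s_2$ is adjacent only to $s_1$ and $s_3$. The vertex $s_3$ is adjacent only to $s_1,s_2,s_4$. Only $s_1$ receives edges from outside $S$, namely the added edges $\{v,s_1\}$.
\begin{itemize}
\item Since $s_4$ has degree one, a perfect matching must contain $\{s_3,s_4\}$. Hence it cannot contain the added edge $\{s_2,s_3\}$.
\item Then $s_2$ must be matched to $s_1$. Hence no added edge $\{v,s_1\}$ is in $M$.
\item More generally, $s_1$ cannot be matched outside $S$, since otherwise $\{s_2,s_3,s_4\}$ would have to be perfectly matched among themselves, which is impossible because this set has odd size.
\end{itemize}

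\emph{Step 2: the edges between variable and clause gadgets.} The remaining edges of $U'$ join a variable-gadget vertex to $r_{j,1}$ or $r_{j,3}$ for some $j$.
\begin{itemize}
\item Each clause gadget $G_j$ (the $6$-cycle plus the chord $\{r_{j,1},r'_{j,3}\}$) is bipartite, with sides $B_j=\{r_{j,1},r_{j,3},r'_{j,2}\}$ and $A_j=\{r_{j,2},r'_{j,1},r'_{j,3}\}$; the chord indeed joins the two sides.
\item Let $A=\bigcup_j A_j$ and $B=\bigcup_j B_j$, so $|A|=|B|=3m$. In $G'_\phi$, every neighbour of a vertex of $A$ lies in $B\cup\{s_1\}$. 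Indeed, the only new edges at $A$-vertices are possibly some edges $\{v,s_1\}$, because the added edges from variable gadgets end only at $r_{j,1}$ and $r_{j,3}$, which lie in $B$.
\item Now suppose $M$ contains such an edge $\{x,r_{j,k}\}$ with $x$ in a variable gadget. Then $r_{j,k}\in B$ is unavailable to $A$, so the $3m$ vertices of $A$ must be matched into a set of at most $3m-1$ vertices of $B$ together with $s_1$.
\item This forces some vertex of $A$ to be matched to $s_1$, i.e.\ $s_1$ is matched outside $S$. This contradicts Step 1.
\end{itemize}

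The only delicate point is the bookkeeping of which endpoints the edges $\{v,s_1\}$ and the variable-to-clause edges have, i.e.\ checking that no added edge attaches to an $A$-vertex other than at $s_1$. That is exactly what makes the Hall-type count of Step 2 go through, so I would verify it carefully against the list of added edges. Everything else is a direct parity and counting argument.
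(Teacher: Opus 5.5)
Your proof is correct. Its first half coincides with the paper's opening observation: every maximum matching of $G'_\phi$ is perfect, since no vertices were added, and $s_4$ and then $s_2$ force $\{s_3,s_4\},\{s_1,s_2\}\in M$.

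For the clause gadgets you take a different route from the paper.
\begin{itemize}
\item The paper argues gadget by gadget via parity. An odd number of $U'$-edges leaving $G_j$ leaves an odd number of vertices of $G_j$ to be matched inside $G_j$. Two such edges must occupy both $r_{j,1}$ and $r_{j,3}$, which leaves $r_{j,2}$ unmatched.
\item You instead use the bipartition $A_j\cup B_j$ of $G_j$ and a Hall-type count. Since $N(A)\subseteq B\cup\{s_1\}$ and $|A|=|B|$, matching any vertex of $B$ to a variable gadget forces some vertex of $A$ onto $s_1$.
\end{itemize}

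Your version needs no case distinction on the number of external edges. It also makes explicit something the paper uses only tacitly: the padding edges $\{v,s_1\}$ can end at clause-gadget vertices (e.g.\ at $r_{j,2}$). So both the paper's parity step and its ``$r_{j,2}$ is unmatched'' step rely on $s_1$ already being matched to $s_2$. The paper's argument is a little shorter and does not need the bipartition, but it leans on the specific local structure around $r_{j,2}$.

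A minor wording point: $s_4$ is joined to $s_3$ by two parallel edges, so say that $s_3$ is its only neighbour rather than that $s_4$ has degree one.
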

\begin{proof}
    Let $M$ be a perfect matching of $G'_\phi, f'_\phi$ (see \cref{itm:pm} in \cref{sec:NP:atmosttwo}). First, observe that $M$ contains the edges $\{s_1, s_2\}, \{s_3, s_4\}$, so none of the edges of $U'$ that are incident to $S$ are in $M$.
    Consider a clause gadget $G_j$. If $M$ contains an odd number of edges from $U'$ connecting $G_j$ to some clause gadget(s) then $G_j$ must contain $M$-unmatched vertices since $G_j$ has an even number of vertices, so $M$ cannot be perfect. On the other hand, if $M$ contains exactly two edges from $U'$ connecting $G_j$ to two clause gadgets, then these edges are incident to $r_{j, 1}$ and $r_{j, 3}$ by the construction of $G'_\phi$. Therefore, the vertex $r_{j, 2}$ is $M$-unmatched, so $M$ cannot be perfect. It follows that $M$ contains no edge from $U'$.
\end{proof}

\subsubsection{Each color class is a paw}

We adapt the reduction from \cref{sec:NP:atmosttwo} in order to show the following.

\rmtriangleandedge*

To prove \cref{thm: rainbow matching triangle plus edge}, we first construct from a 3-occurrence 3SAT formula $\phi$ the graphs $G_\phi, f_\phi$ and $G'_\phi, f'_\phi$ as described in \cref{sec:NP:atmosttwo,sec:NP:atmostthree}. We then add to each color class of $G'_\phi, f_\phi$ that forms a $P_4$ one edge, such that we obtain a paw. We then show that a maximum matching of the resulting graph contains only edges of $G_\phi$ and invoke \cref{thm: Hilfslemma zu outerplanar rainbow matchings} to conclude.

We now describe the construction. Let $\phi$ be a 3-occurrence 3SAT formula on $n$ variables $x_1,\ldots, x_n$ and $m$ clauses $C_1, \ldots, C_m$. We construct, in polynomial time, the edge-colored graph $G'_\phi, f'_\phi$ as described in~\cref{sec:NP:atmostthree}.
For each clause gadget $C_j$, $1\leq j \leq m$, and each edge $\{r_{j,1},r'_{j,1}\}$, $\{r_{j,1},r'_{j,3}\}$ and $\{r_{j,3},r'_{j,3}\}$, we proceed as follows.
Let $\{a_i,u_i\}$ be the edge with same color as the $\{r_{j,1},r'_{j,1}\}$ according to $f'_\phi$. 
Then, by construction described in~\cref{thm: rainbow matching path length three}, there exists an edge of the same color connecting vertex $u_i$ to vertex $r_{j,1}$. 
In order to create a color class with the required properties, we add the edge $\{a_i,r_{j,1}\}$ of the same color to the graph. 
In general, we add the following edges to our graph.
\begin{itemize}
    \item If the edge $\{a_i, \overline{u}_i\}$ (resp., $\{a_i, u_i\}$) has the same color as either $\{r_{j,1}, r'_{j,1}\}$ or $\{r_{j,1}, r'_{j,3}\}$, then add the edge $\{a_i, r_{j,1}\}$ with the same color to the graph.
    
    \item If the edge $\{a_i, \overline{u}_i\}$ (resp., $\{a_i, u_i\}$) has the same color as $\{r_{j,3}, r'_{j,3}\}$, then add the edge $\{a_i, r_{j,3}\}$ with the same color to the graph.
    
    \item If the edge $\{\overline{u}_i, \overline{v}_i\}$ (resp., $\{u_i, v_i\}$) has the same color as either $\{r_{j,1}, r'_{j,1}\}$ or $\{r_{j,1}, r'_{j,3}\}$, then add the edge $\{\overline{u}_i, r_{j,1}\}$ (resp., $\{u_i, r_{j,1}\}$) with the same color to the graph.
    
    \item If the edge $\{\overline{u}_i, \overline{v}_i\}$ (resp., $\{u_i, v_i\}$) has the same color as $\{r_{j,3}, r'_{j,3}\}$, then add the edge $\{\overline{u}_i, r_{j,3}\}$ (resp., $\{u_i, r_{j,3}\}$) with the same color to the graph.
    
    \item If the edge $\{\overline{v}_i, b_i\}$ (resp., $\{v_i, b_i\}$) has the same color as either $\{r_{j,1}, r'_{j,1}\}$ or $\{r_{j,1}, r'_{j,3}\}$, then add the edge $\{\overline{v}_i, r_{j,1}\}$ (resp., $\{v_i, r_{j,1}\}$) with the same color to the graph.
    
    \item If the edge $\{\overline{v}_i, b_i\}$ (resp., $\{v_i, b_i\}$) has the same color as $\{r_{j,3}, r'_{j,3}\}$, then add the edge $\{\overline{v}_i, r_{j,3}\}$ (resp., $\{v_i, r_{j,3}\}$) with the same color to the graph.

    \item For each of the two color classes $\{\{s_1, s_2\}, \{s_3, s_4\}\}$, we add the edges $\{s_1, s_3\}, \{s_2, s_3\}$ of the same color.

    \item For each color class of the form $\{u, v\}, \{s_1, s_3\}$, where $u, v \in V(G_\phi)$, we add the edges $\{v, s_1\}, \{v, s_3\}$ of the same color.
\end{itemize}

Let $G''_\phi, f''_\phi$ be the resulting edge-colored graph. 
By the construction above, every color class is a paw. Let $U'' = E(G''_\phi) \setminus E(G'_\phi)$ be the edges added to $G'_\phi$ in the above construction. 
The proof of the following lemma is obtained by substituting $G'_\phi$ by $G''_\phi$ and $U'$ by $U''$ in the proof of \cref{lem:G'max}.

\begin{lemma}
    \label{lem:G''max}
    No maximum matching of $G''_\phi$ contains an edge from $U''$.
\end{lemma}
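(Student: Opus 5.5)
We argue exactly as in the proof of \cref{lem:G'max}, with $G''_\phi$ in place of $G'_\phi$ and $U'' \cup U'$ in place of $U'$. The plan has three steps.

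\textbf{Step 1: reduce to perfect matchings.} By \cref{itm:pm}, $G_\phi$ has a perfect matching. Since $G''_\phi$ has the same vertex set as $G_\phi$, every maximum matching $M$ of $G''_\phi$ is perfect. It therefore suffices to show that no perfect matching of $G''_\phi$ uses an edge of $E(G''_\phi)\setminus E(G_\phi) = U' \cup U''$. This also covers the claim for $U''$.

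\textbf{Step 2: the vertices of $S$.} The only edges incident to $s_2$ and $s_4$ are $\{s_1,s_2\}$, $\{s_2,s_3\}$ and $\{s_3,s_4\}$. So $s_4$ must be matched to $s_3$, and then $s_2$ must be matched to $s_1$. Hence every edge of $U'\cup U''$ incident to $S$ is absent from $M$. This includes the edges $\{v,s_1\}$, $\{v,s_3\}$ and $\{s_1,s_3\}$.

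\textbf{Step 3: the clause gadgets.} Every remaining added edge joins a vertex of a variable gadget to $r_{j,1}$ or $r_{j,3}$ of some clause gadget $G_j$. No added edge lies inside a gadget or joins two clause gadgets. The parity argument of \cref{lem:G'max} then applies. If $M$ contains exactly one added edge at $G_j$, an odd number of vertices of $G_j$ remain to be matched inside $G_j$, which is impossible. If $M$ contains two, they cover $r_{j,1}$ and $r_{j,3}$. Then $r_{j,2}$, whose only neighbours $r_{j,1}$ and $r_{j,3}$ have no new incident edges inside the gadget, is unmatched. More than two is impossible, since only $r_{j,1}$ and $r_{j,3}$ carry added edges. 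So no added edge is incident to any clause gadget.

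\textbf{Main obstacle.} The step needing care is checking the structural claim of Step 3 for the new edges of $U''$. These are $\{a_i,r_{j,\cdot}\}$, $\{u_i,r_{j,\cdot}\}$, $\{v_i,r_{j,\cdot}\}$ and their barred versions. Each has $r_{j,1}$ or $r_{j,3}$ as its clause endpoint, and the unmodified vertex $r_{j,2}$ has degree two within its gadget. Once Step 3 shows that no added edge is incident to a clause gadget, every added edge has been excluded, because each one touches either $S$ or a clause gadget.
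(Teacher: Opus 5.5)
Your proposal is correct and follows the paper's route. The paper proves this lemma by rerunning the proof of \cref{lem:G'max} with $G''_\phi$ in place of $G'_\phi$: the edges $\{s_1,s_2\},\{s_3,s_4\}$ are forced, and then the parity/$r_{j,2}$ argument is applied at each clause gadget. Your two refinements make that substitution rigorous: you count edges of $U'\cup U''$ together (a perfect matching of $G''_\phi$ could a priori mix $U'$- and $U''$-edges at the same clause gadget), and you justify the forced edges at $S$ via the neighbourhoods of $s_2$ and $s_4$.
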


From \cref{lem:G''max,lem:G'max} we conclude that any maximum matching of $G''_\phi$ contains only edges from $G_\phi \subseteq G''_\phi$. \cref{thm: rainbow matching triangle plus edge} then follows by invoking \cref{thm: Hilfslemma zu outerplanar rainbow matchings}.

\subsection{Hardness-preserving structural conditions for Rainbow Matchings}
\label{sec:hardness}

In this subsection, we present a hardness-preserving structural property of graphs that is used in the proof of the dichotomy theorem.

\rmhardness*

\begin{proof}
    Fix some graph class $\mathcal{G}$ as in the lemma statement.
    By the definition of $\mathcal{G}$, there is some $C \in \{2K_2, P_4, \text{paw}\}$, such that for any $\ell' > 0$, there is some $G \in \mathcal{G}$, such that at least $\ell'$ color classes of $G$ contain $C$ as induced subgraph. Assume without loss of generality that the color classes $G_1, G_2, \ldots, G_{\ell'}$ are isomorphic to $C$ and let $G_{\ell'+1}, \ldots, G_q$, $q \geq \ell$ be the remaining non-trivial color classes of $G$.
    Furthermore, assume that $2K_2$ is an induced subgraph of $G_1, ..., G_{\ell'}$, i.e., $C$ is a $2K_2$.
    We then show the \NP-hardness for \RM on $\mathcal{G}$ by a polynomial-time reduction from \RM where every color class is a $2K_2$, which is \NP-hard by \cref{thm: rainbow matching NP-complete outerplanar}.
    The other two cases, in which $C$ is either a $P_4$ or a paw are analogous, and the \NP-hardness follows by either \cref{thm: rainbow matching path length three} or \cref{thm: rainbow matching triangle plus edge}.
    Hence, we only provide the proof for the case where $C$ is a $2K_2$.
    Note that since $\mathcal{G}$ is color-closed, every edge-colored graph $G'$ with non-trivial color classes isomorphic to $G_1, ..., G_{q}$ and arbitrarily many trivial color classes is contained in~$\mathcal{G}$.

    Let $H = (V, E; c)$ be an edge-colored graph with $\ell'$ color classes $H_1, ..., H_{\ell'}$, such that each color class induces a $2K_2$.
    We construct in polynomial time an instance $G'$ of \RM, such that the graph $G'$ is in $\mathcal{G}$, and a maximum rainbow matching of $H$ can be transformed in polynomial time into a maximum rainbow matching of $G'$ and vice versa. 
    This shows the \NP-hardness of \RM on~$\mathcal{G}$ by \cref{thm: rainbow matching NP-complete outerplanar}. 

    We proceed as follows. Let $H'$ be a copy of $H$.
    Then add to $H'$ a copy $H'_i$ of $G_i$ for $1 \leq i \leq q$.
    Note that the vertices of $H_i'$ are disjoint to $H_j'$ for $1 \leq i \leq q$.
    Now, for $1 \leq i \leq \ell'$, let $V'_i$ (resp., $W_i$) be four vertices of $H'_i$ (resp., $H_i$) that induce a $2K_2$. 
    For $1 \leq i \leq \ell'$, we then identify the vertices $V'_i$ and $W_i$ in a way that identifies the two $2K_2$s. 
    Note that the color classes of $H'$ are exactly $H_1', ..., H'_q$.
    Hence, the resulting graph $H'$ is in $\mathcal{G}$ and note that the vertices of $H_i'$ are disjoint from those of $H_j'$ whenever $\ell' +1 \leq i \leq q$.
    For an illustration of the graph $H'$, see \Cref{fig: matching underlying graph2}.
    The graph $H$ circled in gray contains $\ell' = q = 2$ color classes that have either black or blue edges and are glued to the copies of $G_i$ on the blue and the black $2K_2$.

    In a final step, for $1 \leq i \leq q$, we add to each vertex of $H'_i$ that has not been identified in the previous step a new edge to a new leaf vertex, which receives a new color; see \Cref{fig: matching underlying graph}.
    Let $X$ be the set of the edges added in this step. This concludes the construction of the graph $H'$, which is in~$\mathcal{G}$.
    \begin{figure}[t]
        \centering
        \begin{subfigure}{0.4\textwidth}
    \centering
    \definecolor{ududff}{rgb}{0.30196078431372547,0.30196078431372547,1.}
    \definecolor{yqyqyq}{rgb}{0.5019607843137255,0.5019607843137255,0.5019607843137255}
    \begin{tikzpicture}[line cap=round,line join=round,>=triangle 45,x=1.0cm,y=1.0cm, scale=0.5]
    \tikzstyle{every node}=[font=\large]
    
    \draw [line width=2.pt,color=ududff] (6.02,-2.48)-- (1.52,-2.12);
    \draw [line width=2.pt,color=ududff] (1.52,-2.12)-- (1.4,-0.54);
    \draw [line width=2.pt,color=ududff] (1.76,0.94)-- (1.3,2.08);
    \draw [line width=2.pt,color=ududff] (4.76,1.42)-- (1.3,2.08);
    \draw [line width=2.pt,color=ududff] (4.76,1.42)-- (1.4,-0.54);
    \draw [line width=2.pt,color=ududff] (1.4,-0.54)-- (1.76,0.94);
    \draw [line width=2.pt,color=ududff] (6.02,-2.48)-- (4.78,-0.84);
    \draw [line width=2.pt,color=ududff] (4.78,-0.84)-- (1.4,-0.54);
    \draw [line width=2.pt,color=ududff] (4.78,-0.02)-- (4.76,1.42);
    \draw [rotate around={-88.6452847489691:(5.87,-0.42)},line width=2.pt,color=yqyqyq] (5.87,-0.42) ellipse (3.6702787960688728cm and 2.168973591557258cm);
    \draw[color=yqyqyq] (5.98,2.51) node {$H$};
    
    \draw [line width=2.pt] (6.96,1.48)-- (8.98,-0.36);
    \draw [line width=2.pt] (9.52,2.14)-- (9.86,0.5);
    \draw [line width=2.pt] (8.98,-0.36)-- (9.54,-2.18);
    \draw [line width=2.pt] (9.54,-2.18)-- (9.86,0.5);
    \draw [line width=2.pt] (8.98,-0.36)-- (9.86,0.5);
    \draw [line width=2.pt] (6.96,1.48)-- (7.,0.);
    \draw [line width=2.pt] (7.,-0.86)-- (6.02,-2.48);
    \draw [line width=2.pt] (7.,-0.86)-- (8.98,-0.36);
    \draw [line width=2.pt] (6.96,1.48)-- (9.52,2.14);
    
    \begin{scriptsize}
    \draw [fill=ududff] (6.02,-2.48) circle (3.5pt);
    \draw [fill=ududff] (4.76,1.42) circle (3.5pt);
    \draw [fill=ududff] (1.52,-2.12) circle (3.5pt);
    \draw [fill=ududff] (1.4,-0.54) circle (3.5pt);
    \draw [fill=ududff] (1.76,0.94) circle (3.5pt);
    \draw [fill=ududff] (1.3,2.08) circle (3.5pt);
    \draw [fill=ududff] (4.78,-0.84) circle (3.5pt);
    \draw [fill=ududff] (4.78,-0.02) circle (3.5pt);
    \draw[color=ududff] (5.0,-1.7) node {};
    \draw[color=ududff] (5.19,0.94) node {};
    
    \draw [fill=black] (6.96,1.48) circle (3.5pt);
    \draw [fill=black] (6.02,-2.48) circle (3.5pt); %
    \draw [fill=black] (8.98,-0.36) circle (3.5pt);
    \draw [fill=black] (9.52,2.14) circle (3.5pt);
    \draw [fill=black] (9.86,0.5) circle (3.5pt);
    \draw [fill=black] (9.54,-2.18) circle (3.5pt);
    \draw [fill=black] (7.,0.) circle (3.5pt);
    \draw[color=black] (6.67,0.9) node {};
    \draw [fill=black] (7.,-0.86) circle (3.5pt);
    \draw[color=black] (6.93,-1.7) node {};
    \end{scriptsize}
    \end{tikzpicture}
    \caption{Result of the first step of the construction of the graph $H'$.}
    \label{fig: matching underlying graph2}
\end{subfigure}
        \begin{subfigure}{0.45\textwidth}
    \centering
    \definecolor{yqyqyq}{rgb}{0.5019607843137255,0.5019607843137255,0.5019607843137255}
    \definecolor{qqqqff}{rgb}{0.,0.,1.}
    \definecolor{qqwwtt}{rgb}{0.,0.4,0.2}
    \definecolor{yqqqqq}{rgb}{0.5019607843137255,0.,0.}
    \definecolor{ccwwff}{rgb}{0.8,0.4,1.}
    \definecolor{ffqqff}{rgb}{1.,0.,1.}
    \definecolor{xfqqff}{rgb}{0.4980392156862745,0.,1.}
    \definecolor{qqffqq}{rgb}{0.,1.,0.}
    \definecolor{ffxfqq}{rgb}{1.,0.4980392156862745,0.}
    \definecolor{ffqqqq}{rgb}{1.,0.,0.}
    \definecolor{aqaqaq}{rgb}{0.6274509803921569,0.6274509803921569,0.6274509803921569}
    \definecolor{ududff}{rgb}{0.30196078431372547,0.30196078431372547,1.}
    \begin{tikzpicture}[line cap=round,line join=round,>=triangle 45,x=1.0cm,y=1.0cm, scale=0.5]
    \tikzstyle{every node}=[font=\large]
    \draw [line width=2.pt,color=ududff] (5.94,-2.48)-- (1.52,-2.12);
    \draw [line width=2.pt,color=ududff] (1.52,-2.12)-- (1.4,-0.54);
    \draw [line width=2.pt,color=ududff] (1.76,0.94)-- (1.3,2.08);
    \draw [line width=2.pt,color=ududff] (4.76,1.42)-- (1.3,2.08);
    \draw [line width=2.pt,color=ududff] (4.76,1.42)-- (1.4,-0.54);
    \draw [line width=2.pt,color=ududff] (1.4,-0.54)-- (1.76,0.94);
    \draw [line width=2.pt] (6.96,1.48)-- (8.98,-0.36);
    \draw [line width=2.pt] (9.52,2.14)-- (9.86,0.5);
    \draw [line width=2.pt] (8.98,-0.36)-- (9.54,-2.18);
    \draw [line width=2.pt] (9.54,-2.18)-- (9.86,0.5);
    \draw [line width=2.pt] (8.98,-0.36)-- (9.86,0.5);
    \draw [line width=2.pt,color=ffqqqq] (1.52,-2.12)-- (-0.6,-1.98);
    \draw [line width=2.pt,color=ffxfqq] (-0.4,-0.18)-- (1.4,-0.54);
    \draw [line width=2.pt,color=qqffqq] (-0.12,1.62)-- (1.76,0.94);
    \draw [line width=2.pt,color=xfqqff] (1.3,2.08)-- (2.12,3.28);
    \draw [line width=2.pt,color=ffqqff] (10.86,2.78)-- (9.52,2.14);
    \draw [line width=2.pt,color=ccwwff] (11.68,0.72)-- (9.86,0.5);
    \draw [line width=2.pt,color=yqqqqq] (10.76,-2.76)-- (9.54,-2.18);
    \draw [line width=2.pt,color=qqwwtt] (11.26,-0.76)-- (8.98,-0.36);
    \draw [line width=2.pt,color=ududff] (5.94,-2.48)-- (4.78,-0.84);
    \draw [line width=2.pt,color=ududff] (4.78,-0.02)-- (4.76,1.42);
    \draw [line width=2.pt] (6.96,1.48)-- (7.,0.);
    \draw [line width=2.pt,color=ududff] (4.78,-0.84)-- (1.4,-0.54);
    \draw [line width=2.pt] (7.,-0.86)-- (8.98,-0.36);
    \draw [line width=2.pt] (6.96,1.48)-- (9.52,2.14);
    \draw [rotate around={-87.00153286030528:(1.41,-0.02)},line width=2.pt,dash pattern=on 5pt off 5pt,color=qqqqff] (1.41,-0.02) ellipse (2.3617812209291156cm and 1.0751327990222517cm);
    \draw [rotate around={-89.73474365662463:(9.53,-0.02)},line width=2.pt,dash pattern=on 5pt off 5pt] (9.53,-0.02) ellipse (2.417793084871198cm and 1.0862888203655128cm);
    \draw [rotate around={-88.6452847489691:(5.87,-0.42)},line width=2.pt,color=yqyqyq] (5.87,-0.42) ellipse (3.6702787960688728cm and 2.168973591557258cm);
    \draw [line width=2.pt] (7.,-0.86)-- (5.94,-2.48);
    \begin{scriptsize}
    \draw [fill=ududff] (5.94,-2.48) circle (3.5pt);
    \draw [fill=ududff] (4.76,1.42) circle (3.5pt);
    \draw [fill=black] (6.96,1.48) circle (3.5pt);
    \draw [fill=ududff] (1.52,-2.12) circle (3.5pt);
    \draw [fill=ududff] (1.4,-0.54) circle (3.5pt);
    \draw [fill=ududff] (1.76,0.94) circle (3.5pt);
    \draw [fill=ududff] (1.3,2.08) circle (3.5pt);
    \draw [fill=black] (8.98,-0.36) circle (3.5pt);
    \draw [fill=black] (9.52,2.14) circle (3.5pt);
    \draw [fill=black] (9.86,0.5) circle (3.5pt);
    \draw [fill=black] (9.54,-2.18) circle (3.5pt);
    \draw [fill=aqaqaq] (10.86,2.78) circle (2.5pt);
    \draw [fill=aqaqaq] (11.68,0.72) circle (2.5pt);
    \draw [fill=aqaqaq] (10.76,-2.76) circle (2.5pt);
    \draw [fill=aqaqaq] (11.26,-0.76) circle (2.5pt);
    \draw [fill=aqaqaq] (-0.6,-1.98) circle (2.5pt);
    \draw [fill=aqaqaq] (2.12,3.28) circle (2.5pt);
    \draw [fill=aqaqaq] (-0.12,1.62) circle (2.5pt);
    \draw [fill=aqaqaq] (-0.4,-0.18) circle (2.5pt);
    \draw [fill=ududff] (4.78,-0.84) circle (3.5pt);
    \draw[color=ududff] (5.0,-1.7) node { };
    \draw [fill=ududff] (4.78,-0.02) circle (3.5pt);
    \draw[color=ududff] (5.167510367371316,0.9166442597682075) node {};
    \draw [fill=black] (7.,0.) circle (3.5pt);
    \draw[color=black] (6.668818066142036,0.8976403648470592) node {};
    \draw [fill=black] (7.,-0.86) circle (3.5pt);
    \draw[color=yqyqyq] (6.070195376125863,2.4654616958417965) node {$H$};
    \draw[color=black] (6.93,-1.7) node {};
    \end{scriptsize}
    \end{tikzpicture}
    \caption{Result of the second step of the construction of the graph $H'$.}
    \label{fig: matching underlying graph}
\end{subfigure}
        \caption{Illustration of the two steps in the construction of the graph $H'$ in \cref{lem:hardness-part-for-dichotomy}.\label{fig:placeholder}}
    \end{figure}
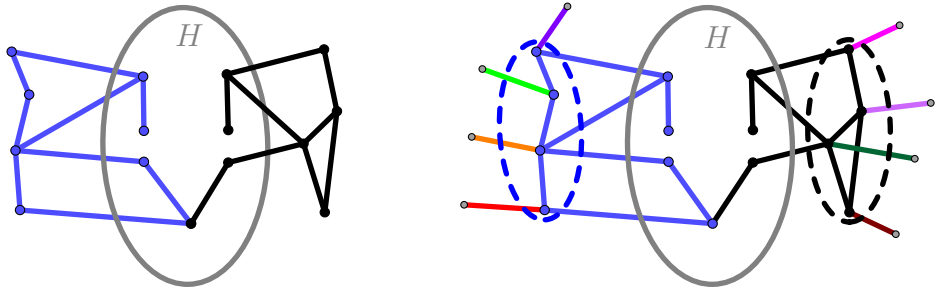

    We now show that $H$ has a maximum rainbow matching of size $p$ if and only if $H'$ has a maximum rainbow matching of size 
    $p' = p - 4\ell + \sum_{1 \leq i \leq r} |V(H_i')|$.
    Let $M$ be a maximum rainbow matching of size $p$ in $H$.
    To construct a rainbow matching of size $p'$, observe that $M \cup X$ is also a rainbow matching and that $|M \cup X| = p'$.
    Now let $M'$ be a maximum rainbow matching of size $p'$ in $H'$.
    Note that there exists a maximum rainbow matching $M''$ of size $p'$ in $G'$ such that $X \subseteq M''$, since every edge of $X$ is incident to a leaf vertex and its color class is of size $1$.
    Now observe that $M = M'' \setminus X$ is also a rainbow matching in $H$ and note that $|M| = p$.
\end{proof}

\subsection{Polynomial-time cases: CM-colored graphs}
\label{sec:polynomial:mm} 

In this section, we prove the polynomial part of \cref{thm: dichotomy result rainbow matchings}.
For this purpose, we consider the more general problem \textsc{Maximum Generalized Rainbow Matching}, where we are allowed to select at most $m_i \geq 1$ edges from each color class $i$. 
We first show that finding such a matching of maximum cardinality admits a polynomial-time algorithm when the instances are restricted to strictly CM-colored graphs (\cref{thm:dichotomy-result-rainbow-l-u-matchings}). 
We prove this by a reduction to \DCSshort, which is a refinement of the one presented in \cite[Section 7]{matching-complete-graph}.
Finally, to obtain \Cref{thm: dichotomy result rainbow matchings poly cases}, the positive part of the dichotomy \cref{thm: dichotomy result rainbow matchings}, we consider $\alpha$-CM-colored graphs and perform a simple enumeration of rainbow matchings on the at most $\alpha$ color classes that are not CM and apply \cref{thm:dichotomy-result-rainbow-l-u-matchings}, letting $m_i = 1$ for the remaining color classes.

\theoremmaintwo*

Notice that the values $m_i$ are part of the input.
In order to obtain \cref{thm:dichotomy-result-rainbow-l-u-matchings}, we cannot directly apply the results from~\cite{matching-complete-graph}, since they rely on the non-trivial color classes being almost vertex-disjoint: If they are not, the constructed matching is not guaranteed to be maximum. 
Furthermore, in~\cite{matching-complete-graph}, each color class is a clique.
We call a matching $M$ of $G$ \emph{$m_i$-restricted} if $|M \cap C_i| \leq m_i$ for each color class $i$, i.e., $M$ has at most $m_i$ edges from color class $i$.
To prove~\cref{thm:dichotomy-result-rainbow-l-u-matchings}, from the input given in~\cref{thm:dichotomy-result-rainbow-l-u-matchings} we construct an equivalent instance $(G', l', u')$ of \DCSshort and obtain a maximum $(l', u')$-matching of $G'$ using the polynomial-time algorithm from~\cite{Gabow,Shiloach}.
By establishing a correspondence between $(l', u')$-matchings of $(G', l', u')$ and $m_i$-restricted 
matchings of $G$, we finally obtain~\cref{thm:dichotomy-result-rainbow-l-u-matchings}.

First, we show that we can assume $m_i=1$ for each color class $i$:
For each color $i$ with $m_i>1$,  
we create $m_i$ copies of the edges of color class $C_i$ to obtain color classes $C_i^1, ..., C_i^{m_i}$ and allow at most $1$ edge to be selected from each color class $C_i^j$, $j \in \{1, ..., m_i\}$. 
Observe that a solution to one instance can be turned into a solution to the other instance such that the solution sizes are the same. 
This is true since in the new instance one can also select up to $m_i$ edges from the copies of color class~$i$. 
Furthermore, since each vertex is incident to at most one edge of the solution, no additional copies can be selected compared to the original graph.
Hence, from now on we assume $m_i=1$ for each color class $i$.

We now describe the construction of the instance $(G', l', u')$ of \DCS.
We first take $G$ and set $l'(v) = 0$ and $u'(v) = 1$ for all $v \in V(G)$, i.e., for each original vertex $v \in V(G)$ we set its lower bound to $0$ and its upper bound to $1$.
Let $C_1, C_2, \ldots, C_k$ be the color classes of $G$ and for $1 \leq i \leq k$, let $n_i = |E(C_i)|$. 
The idea is to construct a gadget for each color class $C_i$.
For each $1 \leq i \leq k$, we proceed as follows.
We replace each edge $\{p,q\}$ of $C_i$ with three new edges $\{p,v^p_q\},\{v^p_q,v^q_p\},\{v^q_p,q\}$. 
The two edges $\{p,v^p_q\}$ and $\{v^q_p,q\}$ are called \emph{half-edges} of $\{p,q\}$ and the edge $\{v^p_q,v^q_p\}$ is called the \emph{eliminator} of $\{p,q\}$. The vertices $v^p_q, v^q_p$ will be called \emph{subdivision vertices} and have a degree interval of $[1,1]$. 

For each color class $1\leq i \leq k$, let $\mathcal{C}_i=\{K_1^i,\ldots, K_{\ell_i}^i\}$ be the set of non-trivial parts from the corresponding complete multipartite graph. 
Notice that $\mathcal{C}_i$ can be computed in polynomial time for each color class $i$, since for the complete multipartite graph $C_i$, the parts $K^i_1, \ldots K_{\ell_i}^i$ are vertex-disjoint.
We introduce $\ell_i$ new \emph{local} vertices  $u_1^i, u_2^i, \ldots , u_{\ell_i}^i$, one for each non-trivial part.  
For each non-trivial part $K_j^i$ in $\mathcal{C}_i$, we connect for each vertex $q$ of $K^i_j$ each subdivision vertex of $C_i$ adjacent to $q$ to $u^i_j$.
The degree interval of a local vertex $u_j^i, 1\leq j \leq \ell_i$ is $[\deg(u_j^i)-1,\deg(u_j^i)-1]$.
The local vertices will later ensure that we pick at most one half-edge from each non-trivial part $K_j^i$.

Next, we introduce a \emph{universal} vertex $u^i_{\ell_i+1}$ for $1 \leq i \leq k$. 
Each subdivision vertex obtained from an edge of $C_i$ is connected to the universal vertex $u_{\ell_i+1}^i$.
The degree interval of this universal vertex is $[2 n_i -\sum_{j=1}^{l_i} u'(u_j^i) -2, 2 n_i-\sum_{j=1}^{l_i} u'(u_j^i) -2]$, where $u'(u_j^i)$ is the upper degree bound of the degree interval of $u_j^i$ from above, which is $\deg(u_j^i)-1$. 
Note that $ 2 n_i - \sum_{j=1}^{l_i} u'(u_j^i) - 2 \geq 0$.
This universal vertex ensures that any subgraph respecting the degree bounds contains at most $2$ half-edges from each color class. 
This concludes the construction. 

Please refer to \Cref{fig: example thm 4.5 not transformed graph,fig: example thm 4.5 graph G'} for an example of the construction of $G'$ from a graph~$G$, whose color classes are two $K_3$s, a $K_{1, 4}$ and a $K_{2, 2}$. 
Notice that each color class in the example is complete multipartite.
The red and green color classes have one edge in common. Similarly, the green and purple color classes have one edge in common. The other color classes do not have edges in common. The new vertices added in the construction of $G'$ are colored according to their color class. The other vertices are black. The local vertices have the shape of a triangle and the universal vertices the shape of a rhombus. 

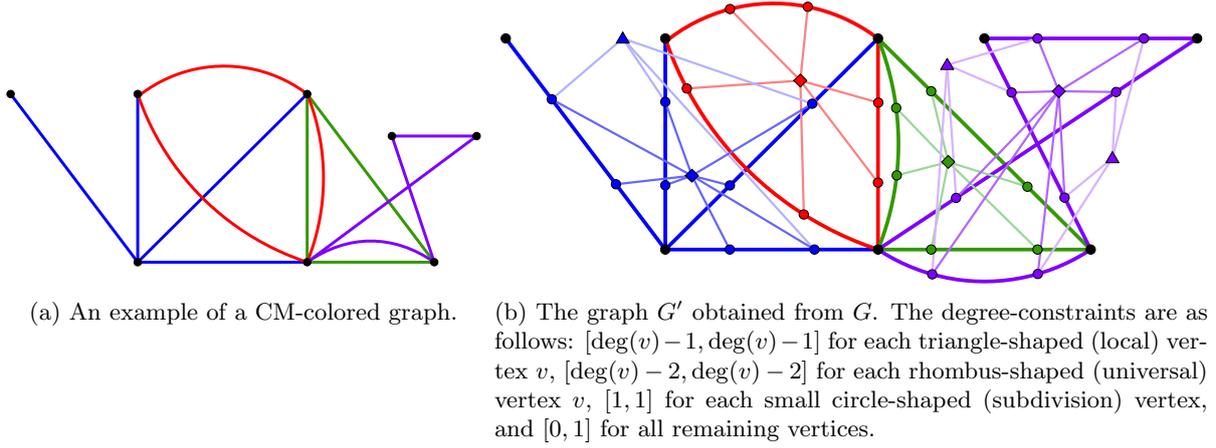
\begin{figure}  
    \centering
    \begin{subfigure}[t]{0.4\textwidth}
    \centering
    \definecolor{xfqqff}{rgb}{0.4980392156862745,0.,1.}
    \definecolor{xfqqffLight}{rgb}{0.7,0.4,1.}
    \definecolor{xfqqffVLight}{rgb}{0.85,0.7,1.}
    \definecolor{qqqqff}{rgb}{0.,0.,1.}
    \definecolor{qqqqffLight}{rgb}{0.4,0.4,1.}
    \definecolor{qqqqffVLight}{rgb}{0.7,0.7,1.}
    \definecolor{ffqqqq}{rgb}{1.,0.,0.}
    \definecolor{ffqqqqLight}{rgb}{1.,0.5,0.5}
    \definecolor{ttzzqq}{rgb}{0.2,0.6,0.}
    \definecolor{ttzzqqLight}{rgb}{0.6,0.85,0.6}
    \resizebox{\textwidth}{!}{
    \begin{tikzpicture}[line cap=round,line join=round,>=triangle 45,x=1.0cm,y=1.0cm]
    \draw [line width=2.pt,color=ttzzqq] (5.,4.)-- (8.,0.);
    \draw [line width=2.pt,color=ttzzqq] (8.,0.)-- (5.,0.);
    \draw [line width=2.pt,color=ttzzqq] (5.,0.)-- (5.,4.);
    \draw [shift={(0.,2.)},line width=2.pt,color=ffqqqq]  plot[domain=-0.3805063771123649:0.3805063771123649,variable=\t]({1.*5.385164807134505*cos(\t r)+0.*5.385164807134505*sin(\t r)},{0.*5.385164807134505*cos(\t r)+1.*5.385164807134505*sin(\t r)});
    \draw [line width=2.pt,color=qqqqff] (1.,0.)-- (1.,4.);
    \draw [line width=2.pt,color=qqqqff] (5.,0.)-- (1.,0.);
    \draw [line width=2.pt,color=qqqqff] (1.,0.)-- (5.,4.);
    \draw [shift={(7.,6.)},line width=2.pt,color=ffqqqq]  plot[domain=3.4633432079864352:4.3906384259880475,variable=\t]({1.*6.324555320336759*cos(\t r)+0.*6.324555320336759*sin(\t r)},{0.*6.324555320336759*cos(\t r)+1.*6.324555320336759*sin(\t r)});
    \draw [line width=2.pt,color=xfqqff] (9.,3.)-- (7.,3.);
    \draw [line width=2.pt,color=xfqqff] (7.,3.)-- (8.,0.);
    \draw [line width=2.pt,color=xfqqff] (9.,3.)-- (5.,0.);
    \draw[line width=2.pt,color=xfqqff] (5,0) to[out=35,in=145] (8,0);
    \draw [shift={(3.04,1.44)},line width=2.pt,color=ffqqqq]  plot[domain=0.9173699856141347:2.24362887438866,variable=\t]({1.*3.2241588050218617*cos(\t r)+0.*3.2241588050218617*sin(\t r)},{0.*3.2241588050218617*cos(\t r)+1.*3.2241588050218617*sin(\t r)});
    \draw [line width=2.pt,color=qqqqff] (1.,0.)-- (-2.,4.);
    \begin{scriptsize}
    \draw [fill=black] (5.,4.) circle (2.5pt);
    \draw[color=black] (4.96,4.41) node {};
    \draw [fill=black] (5.,0.) circle (2.5pt);
    \draw[color=black] (4.96,-0.41) node {};
    \draw [fill=black] (8.,0.) circle (2.5pt);
    \draw [fill=black] (7.,3.) circle (2.5pt);
    \draw [fill=black] (1.,0.) circle (2.5pt);
    \draw [fill=black] (1.,4.) circle (2.5pt);
    \draw [fill=black] (9.,3.) circle (2.5pt);
    \draw [fill=black] (-2.,4.) circle (2.5pt);
    \end{scriptsize}
    \end{tikzpicture}
    }
    \caption{An example of a CM-colored graph.\label{fig: example thm 4.5 not transformed graph}}
\end{subfigure}
\begin{subfigure}[t]{0.59\textwidth}
    \centering
    \definecolor{ududff}{rgb}{0.30196078431372547,0.30196078431372547,1.}
    \definecolor{xfqqff}{rgb}{0.4980392156862745,0.,1.}
    \definecolor{xfqqffLight}{rgb}{0.7,0.4,1.}
    \definecolor{xfqqffVLight}{rgb}{0.85,0.7,1.}
    \definecolor{qqqqff}{rgb}{0.,0.,1.}
    \definecolor{qqqqffLight}{rgb}{0.4,0.4,1.}
    \definecolor{qqqqffVLight}{rgb}{0.7,0.7,1.}
    \definecolor{ffqqqq}{rgb}{1.,0.,0.}
    \definecolor{ffqqqqLight}{rgb}{1.,0.5,0.5}
    \definecolor{ttzzqq}{rgb}{0.2,0.6,0.}
    \definecolor{ttzzqqLight}{rgb}{0.6,0.85,0.6}

    \resizebox{\textwidth}{!}{
    \begin{tikzpicture}[line cap=round,line join=round,>=triangle 45,x=1.0cm,y=1.0cm]
    \draw [line width=2.pt,color=ttzzqq] (5.,4.)-- (9.,0.);
    \draw [line width=2.pt,color=ttzzqq] (9.,0.)-- (5.,0.);
    \draw [line width=2.pt,color=ffqqqq] (5.,0.)-- (5.,4.);
    \draw [shift={(0.,2.)},line width=2.pt,color=ttzzqq]  plot[domain=-0.3805063771123649:0.3805063771123649,variable=\t]({1.*5.385164807134505*cos(\t r)+0.*5.385164807134505*sin(\t r)},{0.*5.385164807134505*cos(\t r)+1.*5.385164807134505*sin(\t r)});
    \draw [line width=2.pt,color=qqqqff] (1.,0.)-- (1.,4.);
    \draw [line width=2.pt,color=qqqqff] (5.,0.)-- (1.,0.);
    \draw [line width=2.pt,color=qqqqff] (1.,0.)-- (5.,4.);
    \draw [shift={(7.,6.)},line width=2.pt,color=ffqqqq]  plot[domain=3.4633432079864352:4.3906384259880475,variable=\t]({1.*6.324555320336759*cos(\t r)+0.*6.324555320336759*sin(\t r)},{0.*6.324555320336759*cos(\t r)+1.*6.324555320336759*sin(\t r)});
    \draw [line width=2.pt,color=xfqqff] (11.,4.)-- (7.,4.);
    \draw [line width=2.pt,color=xfqqff] (7.,4.)-- (9.,0.);
    \draw [line width=2.pt,color=xfqqff] (11.,4.)-- (5.,0.);
    \draw [shift={(3.04,1.44)},line width=2.pt,color=ffqqqq]  plot[domain=0.9173699856141347:2.24362887438866,variable=\t]({1.*3.2241588050218617*cos(\t r)+0.*3.2241588050218617*sin(\t r)},{0.*3.2241588050218617*cos(\t r)+1.*3.2241588050218617*sin(\t r)});
    \draw [shift={(7.,3.)},line width=2.pt,color=xfqqff]  plot[domain=4.124386376837122:5.3003915839322575,variable=\t]({1.*3.6055512754639896*cos(\t r)+0.*3.6055512754639896*sin(\t r)},{0.*3.6055512754639896*cos(\t r)+1.*3.6055512754639896*sin(\t r)});
    \draw [line width=1.2pt,color=ffqqqqLight] (1.4045914103554615,3.0518814957786855)-- (3.5394572742335875,3.2021275509821447);
    \draw [line width=1.2pt,color=ffqqqqLight] (3.5394572742335875,3.2021275509821447)-- (2.216578462580562,4.557238677373583);
    \draw [line width=1.2pt,color=ffqqqqLight] (3.5394572742335875,3.2021275509821447)-- (3.68,4.6);
    \draw [line width=1.2pt,color=ffqqqqLight] (3.6045012494913373,0.6642162492006749)-- (3.5394572742335875,3.2021275509821447);
    \draw [line width=1.2pt,color=ffqqqqLight] (5.,1.2680391818196652)-- (3.5394572742335875,3.2021275509821447);
    \draw [line width=1.2pt,color=ffqqqqLight] (5.005824926670503,2.784971236064747)-- (3.5394572742335875,3.2021275509821447);
    \draw [line width=1.2pt,color=ttzzqqLight] (6.320499373682906,1.6599132958329788)-- (7.809978323958248,1.1900216760417521);
    \draw [line width=1.2pt,color=ttzzqqLight] (8.,0.)-- (6.320499373682906,1.6599132958329788);
    \draw [line width=1.2pt,color=ttzzqqLight] (6.,0.)-- (6.320499373682906,1.6599132958329788);
    \draw [line width=1.2pt,color=ttzzqqLight] (5.352207091580496,1.405122492578971)-- (6.320499373682906,1.6599132958329788);
    \draw [line width=1.2pt,color=ttzzqqLight] (5.341658624505617,2.6831421076502076)-- (6.320499373682906,1.6599132958329788);
    \draw [line width=1.2pt,color=ttzzqqLight] (6.,3.)-- (6.320499373682906,1.6599132958329788);
    \draw [line width=1.2pt,color=xfqqffVLight] (9.40783034448357,1.7328054621766231)-- (10.,4.);
    \draw [line width=1.2pt,color=xfqqffVLight] (9.477878472205377,2.985252314803585)-- (9.40783034448357,1.7328054621766231);
    \draw [line width=1.2pt,color=xfqqffVLight] (9.40783034448357,1.7328054621766231)-- (8.005096801277759,-0.462626231642858);
    \draw [line width=1.2pt,color=xfqqffVLight] (9.40783034448357,1.7328054621766231)-- (8.508036181717761,0.9839276365644767);
    \draw [line width=2.pt,color=qqqqff] (1.,0.)-- (-2.,4.);
    \draw [line width=1.2pt,color=xfqqffVLight] (6.3,3.5)-- (8.,4.);
    \draw [line width=1.2pt,color=xfqqffVLight] (6.3,3.5)-- (7.511917466096825,2.9761650678063503);
    \draw [line width=1.2pt,color=xfqqffVLight] (6.3,3.5)-- (6.467351779934366,0.9782345199562439);
    \draw [line width=1.2pt,color=xfqqffVLight] (6.3,3.5)-- (6.019059748455469,-0.4695469766094469);
    \draw [line width=1.2pt,color=qqqqffLight] (0.07228005134939886,1.2369599315341349)-- (1.5,1.4);
    \draw [line width=1.2pt,color=qqqqffLight] (1.,1.2174747800114967)-- (1.5,1.4);
    \draw [line width=1.2pt,color=qqqqffLight] (2.2148082533903164,1.2148082533903164)-- (1.5,1.4);
    \draw [line width=1.2pt,color=qqqqffLight] (2.2121417267691363,0.)-- (1.5,1.4);
    \draw [line width=1.2pt,color=qqqqffVLight] (-1.1351978638296858,2.846930485106248)-- (0.2,4);
    \draw [line width=1.2pt,color=qqqqffVLight] (1.,2.79761233651679)-- (0.2,4);
    \draw [line width=1.2pt,color=qqqqffVLight] (3.763343058765504,2.763343058765504)-- (0.2,4);
    \draw [line width=1.2pt,color=qqqqffVLight] (3.804920383726472,0.)-- (0.2,4);
    \draw [line width=1.2pt,color=qqqqffLight] (-1.1351978638296858,2.846930485106248)-- (1.5,1.4);
    \draw [line width=1.2pt,color=qqqqffLight] (1.,2.79761233651679)-- (1.5,1.4);
    \draw [line width=1.2pt,color=qqqqffLight] (3.763343058765504,2.763343058765504)-- (1.5,1.4);
    \draw [line width=1.2pt,color=qqqqffLight] (3.804920383726472,0.)-- (1.5,1.4);
    \draw [line width=1.2pt,color=xfqqffLight] (8.4,3)-- (10.,4.);
    \draw [line width=1.2pt,color=xfqqffLight] (9.477878472205377,2.985252314803585)-- (8.4,3);
    \draw [line width=1.2pt,color=xfqqffLight] (8.4,3)-- (8.005096801277759,-0.462626231642858);
    \draw [line width=1.2pt,color=xfqqffLight] (8.4,3)-- (8.508036181717761,0.9839276365644767);
    \draw [line width=1.2pt,color=xfqqffLight] (8.4,3)-- (8.,4.);
    \draw [line width=1.2pt,color=xfqqffLight] (8.4,3)-- (7.511917466096825,2.9761650678063503);
    \draw [line width=1.2pt,color=xfqqffLight] (8.4,3)-- (6.467351779934366,0.9782345199562439);
    \draw [line width=1.2pt,color=xfqqffLight] (8.4,3)-- (6.019059748455469,-0.4695469766094469);
    
    \begin{scriptsize}
    \draw [fill=black] (5.,4.) circle (2.5pt);
    \draw [fill=black] (5.,0.) circle (2.5pt);
    \draw [fill=black] (9.,0.) circle (2.5pt);
    \draw [fill=black] (7.,4.) circle (2.5pt);
    \draw [fill=black] (1.,0.) circle (2.5pt);
    \draw [fill=black] (1.,4.) circle (2.5pt);
    \draw [fill=black] (11.,4.) circle (2.5pt);
    \draw [fill=ffqqqq] (1.4045914103554615,3.0518814957786855) circle (2.5pt);
    \draw [fill=ffqqqq] (3.6045012494913373,0.6642162492006749) circle (2.5pt);
    \draw [fill=ffqqqq] (2.216578462580562,4.557238677373583) circle (2.5pt);
    \draw [fill=ffqqqq] (3.68,4.6) circle (2.5pt);
    \draw [fill=ffqqqq] (5.005824926670503,2.784971236064747) circle (2.5pt);
    \draw [fill=ffqqqq] (5.,1.2680391818196652) circle (2.5pt);
    \draw [fill=ffqqqq] (3.5394572742335875,3.2021275509821447) ++(-3.5pt,0 pt) -- ++(3.5pt,3.5pt)--++(3.5pt,-3.5pt)--++(-3.5pt,-3.5pt)--++(-3.5pt,3.5pt);
    \draw [fill=ttzzqq] (5.341658624505617,2.6831421076502076) circle (2.5pt);
    \draw [fill=ttzzqq] (5.352207091580496,1.405122492578971) circle (2.5pt);
    \draw [fill=ttzzqq] (6.,0.) circle (2.5pt);
    \draw [fill=ttzzqq] (8.,0.) circle (2.5pt);
    \draw [fill=ttzzqq] (7.809978323958248,1.1900216760417521) circle (2.5pt);
    \draw [fill=ttzzqq] (6.,3.) circle (2.5pt);
    \draw [fill=ttzzqq] (6.320499373682906,1.6599132958329788) ++(-3.5pt,0 pt) -- ++(3.5pt,3.5pt)--++(3.5pt,-3.5pt)--++(-3.5pt,-3.5pt)--++(-3.5pt,3.5pt);
    \draw [fill=xfqqff] (8.,4.) circle (2.5pt);
    \draw [fill=xfqqff] (10.,4.) circle (2.5pt);
    \draw [fill=xfqqff] (7.511917466096825,2.9761650678063503) circle (2.5pt);
    \draw [fill=xfqqff] (9.477878472205377,2.985252314803585) circle (2.5pt);
    \draw [fill=xfqqff] (6.467351779934366,0.9782345199562439) circle (2.5pt);
    \draw [fill=xfqqff] (6.019059748455469,-0.4695469766094469) circle (2.5pt);
    \draw [fill=xfqqff] (8.005096801277759,-0.462626231642858) circle (2.5pt);
    \draw [fill=xfqqff] (9.40783034448357,1.7328054621766231) ++(0,3.5pt)  -- ++(-3.5pt,-6pt)  -- ++(7pt,0)   -- cycle;
    \draw [fill=xfqqff] (8.508036181717761,0.9839276365644767) circle (2.5pt);
    \draw [fill=black] (-2.,4.) circle (2.5pt);
    \draw [fill=xfqqff] (6.3,3.5)++(0,3.5pt)  -- ++(-3.5pt,-6pt)  -- ++(7pt,0)   -- cycle;
    \draw [fill=xfqqff] (8.4,3) ++(-3.5pt,0 pt) -- ++(3.5pt,3.5pt)--++(3.5pt,-3.5pt)--++(-3.5pt,-3.5pt)--++(-3.5pt,3.5pt);
    \draw [fill=qqqqff] (1.,2.79761233651679) circle (2.5pt);
    \draw [fill=qqqqff] (1.,1.2174747800114967) circle (2.5pt);
    \draw [fill=qqqqff] (2.2121417267691363,0.) circle (2.5pt);
    \draw [fill=qqqqff] (3.804920383726472,0.) circle (2.5pt);
    \draw [fill=qqqqff] (2.2148082533903164,1.2148082533903164) circle (2.5pt);
    \draw [fill=qqqqff] (3.763343058765504,2.763343058765504) circle (2.5pt);
    \draw [fill=qqqqff] (-1.1351978638296858,2.846930485106248) circle (2.5pt);
    \draw [fill=qqqqff] (0.07228005134939886,1.2369599315341349) circle (2.5pt);
    \draw [fill=qqqqff] (1.5,1.4) ++(-3.5pt,0 pt) -- ++(3.5pt,3.5pt)--++(3.5pt,-3.5pt)--++(-3.5pt,-3.5pt)--++(-3.5pt,3.5pt);
    \draw [fill=qqqqff] (0.2,4) ++(0,3.5pt)  -- ++(-3.5pt,-6pt)  -- ++(7pt,0)   -- cycle;
    \end{scriptsize}
    \end{tikzpicture}    
    }
    \caption{The graph $G'$ obtained from $G$. The degree-constraints are as follows: $[\deg(v)-1, \deg(v)-1]$ for each triangle-shaped (local) vertex $v$, $[\deg(v) -2, \deg(v)-2]$ for each rhombus-shaped (universal) vertex $v$, $[1, 1]$ for each small circle-shaped (subdivision) vertex, and $[0, 1]$ for all remaining vertices.
    \label{fig: example thm 4.5 graph G'}}
\end{subfigure}

    \caption{Example of the construction used in the reduction of \cref{thm:dichotomy-result-rainbow-l-u-matchings}.\label{fig:DCS example}}
\end{figure}
The above construction has the following intuition. 
Our goal is that selecting an edge in $G$ should correspond to selecting two \emph{compatible} half-edges in $G'$ and vice versa.
To ensure that we can transform a rainbow matching of $G$ into an $(l, u)$-matching of $G'$ and vice versa, for each color class $i$ we introduced two kinds of new vertices in $G'$: local vertices for each non-trivial part and one universal vertex.
The upper and lower bounds on the local vertices for each non-trivial part $K_j^i$ ensure that at most one half-edge per non-trivial part is selected in a feasible $(l', u')$-matching of $G'$.
Hence, in $G$ we do not match in color class $i$ two vertices from the same non-trivial part $K_j^i$, as desired.
Furthermore, the universal vertex of $G'$ must be incident to  $2 n_i-\sum_{j=1}^{l_i} u'(u_j^i) -2$ edges in any feasible $(l', u')$-matching of $G'$. 
Together with the degree bounds of the local vertices, this ensures that at most $2$ half-edges of $C_i$ are selected.
This corresponds to selecting in $G$ at most one edge per color class, as desired.
Finally, the degree constraint of exactly $1$ for the subdivision vertices implies that either exactly 0 or exactly 2 half-edges are selected. 
Hence, either we can select $0$ eliminator edges in $G'$, which results in selecting exactly $2$ half-edges in $G'$ and thus one edge in $G$, or we select $1$ eliminator edge in $G'$, which results in selecting $0$ half-edges in $G'$ and thus $0$ edges in $G$. We formalize this intuition in the claims below.

In the following, let $M$ be a (not necessarily maximum) rainbow matching in $G$ and $M'$ be a feasible (not necessarily maximum) solution to the instance $(G', l', u')$ of \DCSshort from above.
We refer to half-edges and eliminator edges from color class $i$ to be those edges of $G'$ that resulted from subdividing the edges of color class $i$ from $G$.
Similarly, we refer to the subdivision, universal and local vertices from color class $i$ in $G'$ to be those vertices introduced for color class $i$ from $G$.
Note that the lower and upper degree bounds of subdivision, local, and universal vertices are equal. Hence, we simply refer to their lower or upper degree bounds as \emph{degree bounds} or \emph{degree constraints}.

\begin{claim}
    \label{claim:structure:DCS}
    Let $M'\subseteq E(G')$ be a feasible solution to the instance $(G', l', u')$ of \DCSshort from above. Then the following hold:
    \begin{enumerate}
        \item  For each color class $i$, the solution $M'$ either contains $0$ eliminator edges and $2$ half-edges of color class $i$, or $M'$ contains $1$ eliminator edge and $0$ half-edges of color class $i$.
        \item If $M'$ contains $2$ half-edges of color class $i$, then these half-edges do not belong to the same non-trivial part $K_j^i$ from color class $i$.
    \end{enumerate}
\end{claim}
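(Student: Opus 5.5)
The plan is a double count over the edges of $G'$ incident to the subdivision vertices of a fixed color class $i$, using that all degree bounds on these vertices are exact.

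Fix a color class $i$ with $n_i$ edges, so it has $2n_i$ subdivision vertices. Each subdivision vertex $v^p_q$ has exactly one half-edge (to $p$), one eliminator edge (to $v^q_p$), one edge to the universal vertex $u^i_{\ell_i+1}$, and possibly one edge to a local vertex (when $p$ lies in a non-trivial part; parts are disjoint, so there is at most one). Let $h$, $e$, $a$ and $b$ be the numbers of half-edges, eliminator edges, edges to local vertices and edges to the universal vertex of class $i$ in $M'$.

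Summing the $[1,1]$ constraints over the subdivision vertices gives
\[
h + 2e + a + b = 2n_i .
\]
The local and universal vertices of class $i$ are adjacent only to subdivision vertices of class $i$, and their degree bounds are exact. Hence
\[
a = \sum_{j} \bigl(\deg(u^i_j)-1\bigr) = \sum_j u'(u^i_j), \qquad b = 2n_i - \sum_j u'(u^i_j) - 2 .
\]
Substituting yields $h + 2e = 2$. Since $h,e\ge 0$, either $e=1,h=0$ or $e=0,h=2$, which is part 1.

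For part 2, suppose the two half-edges $\{p,v^p_q\}$ and $\{p',v^{p'}_{q'}\}$ in $M'$ both have their original endpoints in the same non-trivial part $K^i_j$. The two subdivision vertices $v^p_q$ and $v^{p'}_{q'}$ are distinct, since a subdivision vertex is incident to only one half-edge. Each is already saturated by its half-edge, so neither can use its edge to $u^i_j$. Thus at least two of the $\deg(u^i_j)$ edges at $u^i_j$ are excluded from $M'$, giving $\deg_{M'}(u^i_j)\le \deg(u^i_j)-2$. This contradicts the exact bound $\deg(u^i_j)-1$.

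The main care point is the bookkeeping in the double count. The step to check is that every edge at a local or universal vertex of class $i$ goes to a subdivision vertex of class $i$, and that each subdivision vertex meets at most one local vertex. Both hold by construction: local and universal vertices are created per color class, and the parts of $C_i$ are disjoint.
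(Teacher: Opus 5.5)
Your proof is correct and follows the paper's argument. Part 1 compares the exact degree budget of the local and universal vertices, $2n_i-2$, against the $2n_i$ subdivision vertices, and part 2 notes that two saturated subdivision vertices in $S^i_j$ would leave $u^i_j$ with degree at most $\deg(u^i_j)-2$; your explicit identity $h+2e=2$ is a slightly tighter rendering of the paper's ``at most two subdivision vertices are not matched'' step, since it also rules out the case that none remain.
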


\begin{proof}
    It suffices to consider the edges and vertices introduced in $G'$ for any color class $i$ of $G$. Hence, for readability we sometimes omit that we refer to color class $i$.
    
    We first prove the first statement.
    Note that the total sum of degree bounds of local vertices and the universal vertex is $2 n_i -2$, where $2 n_i$ is twice the number of edges of color class $i$, which corresponds to the number of subdivision vertices introduced for color class $i$.
    These subdivision vertices have a degree bound of $1$. 
    Hence, in $M'$ at most $2$ subdivision vertices are not matched to the universal and local vertices.
    In~$M'$, these subdivision vertices can only be incident to a half-edge or an eliminator edge.
    However, since an eliminator edge is incident to exactly $2$ subdivision vertices, in $M'$ these subdivision vertices must be either incident to $1$ eliminator edge and $0$ half-edges, or incident to $0$ eliminator edges and $2$ half-edges.
    This  proves the first statement.

    It remains to prove the second statement.
    Assume that $M'$ contains $2$ half-edges of color class $i$ and assume for the sake of contradiction that these edges belong to the same non-trivial part $K_j^i$ of color class $i$.
    By construction of $G'$, there is a local vertex $u_j^i$ introduced for clique $K_j^i$ that is incident to all subdivision vertices introduced for edges going out of the vertices of $K_j^i$. 
    Let this set of subdivision vertices be $S_j^i$.
    The degree interval of the local vertex $u_j^i$ is $[\deg(u_j^i)-1, \deg(u_j^i)-1]$, where $|S_j^i| = \deg(u_j^i)$, and therefore at most one vertex of $S_j^i$ is not matched to $u_j^i$ in $M'$. This contradicts our assumption, and hence completes the proof.
\end{proof}

We next show that a rainbow matching $M$ of $G$ corresponds to a feasible solution $M'$ to $G'$ and vice versa, such that twice the number of edges in $M$ corresponds to the number of half-edges in $M'$.
For this purpose, let $h(M')$ be the number of half-edges in $M'$.

\begin{claim} 
    \label{claim:DCS:quasiclique}
    A rainbow matching $M$ of $G$ can be transformed in polynomial time into a feasible solution $M'$ of $G'$ such that $2|M| = h(M')$ and vice versa.
\end{claim}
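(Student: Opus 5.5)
We prove the two directions separately, working one color class $i$ at a time. All gadgets of different color classes share only original vertices of $G$, so feasibility of $M'$ can be checked class by class, together with the upper bound $1$ at original vertices.

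\emph{From $M$ to $M'$.} Take a rainbow matching $M$ of $G$ and fix a color class $i$.

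If $M$ contains no edge of $C_i$, we proceed as follows.
\begin{itemize}
\item Pick an arbitrary edge $\{p,q\}$ of $C_i$ and put its eliminator $\{v^p_q,v^q_p\}$ into $M'$.
\item Every remaining subdivision vertex of class $i$ must now be matched to a local or universal vertex of class $i$. Each local vertex $u^i_j$ must get exactly $\deg(u^i_j)-1$ of its neighbours; the universal vertex takes all the rest.
\item Among the neighbours of $u^i_j$ we need exactly one vertex left over for it not to take. If $v^p_q$ or $v^q_p$ is adjacent to $u^i_j$, it is already used by the eliminator and serves as the leftover. A subdivision vertex is adjacent to at most one local vertex, namely the one of the part containing its original endpoint.
\item For every local vertex not adjacent to $v^p_q$ or $v^q_p$, we choose one neighbour that it does not take; that vertex is sent to the universal vertex instead. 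Every other subdivision vertex goes to its local vertex if it has one, and to the universal vertex otherwise.
\end{itemize}

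The main thing to verify is the count at the universal vertex. There are $2n_i$ subdivision vertices. Two are covered by the eliminator, and $\sum_j(\deg(u^i_j)-1)$ go to local vertices. So exactly $2n_i-\sum_j u'(u^i_j)-2$ remain, which is its bound. The nonnegativity noted in the construction shows this is consistent.

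If $M$ contains the edge $\{p,q\}$ of $C_i$, we instead do the following.
\begin{itemize}
\item Put both half-edges $\{p,v^p_q\}$ and $\{v^q_p,q\}$ into $M'$.
\item Since $C_i$ is complete multipartite, $p$ and $q$ lie in different parts. Hence $v^p_q$ and $v^q_p$ are adjacent to different local vertices, if to any.
\item The assignment of the other subdivision vertices is exactly as before, and the count is identical.
\end{itemize}

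Original vertices have degree at most $1$ in $M'$ because $M$ is a matching. We also get $h(M')=2|M|$, and everything runs in polynomial time.

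\emph{From $M'$ to $M$.} By \Cref{claim:structure:DCS}, each class $i$ contributes either $0$ or exactly $2$ half-edges. The main obstacle is to turn two half-edges $\{p,v^p_q\}$ and $\{v^{s}_{t},s\}$ into an edge of $C_i$: they may come from different original edges, i.e.\ be an incompatible pair. Here complete multipartiteness is used.
\begin{itemize}
\item By the second statement of \Cref{claim:structure:DCS}, $p$ and $s$ are not in the same non-trivial part of $C_i$.
\item If $p=s$, the vertex $p$ would have degree $2$ in $M'$, violating $u'(p)=1$. So $p\neq s$, and they cannot both be one trivial part either.
\item Hence $p$ and $s$ lie in different parts, and $\{p,s\}\in E(C_i)$ by complete multipartiteness. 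We map class $i$ to this edge $\{p,s\}$.
\end{itemize}

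The resulting set has at most one edge per color. It is a matching because every original vertex is incident to at most one half-edge of $M'$ over all classes, by its upper bound $1$. Its size is $h(M')/2$, as required.
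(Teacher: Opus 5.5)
Your proposal is correct and follows the paper's proof in both directions. Forward, you put in the half-edges for edges of $M$, an arbitrary eliminator for each unused class, saturate every local vertex to $\deg-1$, and send the rest to the universal vertex. Conversely, you use \Cref{claim:structure:DCS} plus complete multipartiteness to turn the two selected half-edges into an edge $\{p,s\}$ of $C_i$.

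You are slightly more careful than the paper in three places: you rule out $p=s$ via $u'(p)=1$, you check that each local vertex has at most one pre-used neighbour because $p$ and $q$ lie in different parts, and you correctly state the size as $h(M')/2$ where the paper writes $|M'|/2$.
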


\begin{proof}
    For the first direction, let $M$ be a rainbow matching of $G$.
    To construct $M'$, for each edge $e \in M$ we select the corresponding half-edges in $G'$ in the color class of $e$.
    For a color class $i$ for which $M$ does not have an edge, we select an arbitrary eliminator edge to be in $M'$. Next, we can select a set of edges incident to the local vertices $u_j^i$ of class $i$ such that $\deg(u_j^i)-1$ many edges are incident to the local vertices $u_j^i$.
    Finally, there are exactly $2 n_i -\sum_{j=1}^{l_i} u'(u_j^i) -2$ many unmatched subdivision vertices, which we can match to the universal vertex $u_{j+1}^i$ of color class $i$, which is exactly the degree constraint of this universal vertex.
    Note that we have $h(M') = 2 |M|$ by construction.
    Furthermore, note that $M'$ is feasible since all subdivision, local, and universal vertices satisfy their degree constraints and that each original vertex of $G$ in $G'$ is incident to at most one edge.
    Hence, $M'$ is a feasible solution to $(G', l', u')$ with $h(M') = 2 |M|$.

    Next, let $M'$ be a feasible solution to $G'$. By the first statement of \Cref{claim:structure:DCS}, we have that for each color class $i$, the solution $M'$ either contains $0$ eliminator edges and $2$ half-edges of color class $i$, or $M'$ contains $1$ eliminator edge and $0$ half-edges of color class $i$.
    Hence, if no half-edge is selected for color class $i$ in $G'$, we do not select an edge of color class $i$ in $G$ for $M$.
    Next, assume that exactly two half-edges are selected for color class $i$ in $G'$.
    Let $x, y \in V(G)$ be the vertices of $G$ corresponding to the vertices of $G'$ incident to these two half-edges. 
    By the second statement of \Cref{claim:structure:DCS}, if $M'$ contains $2$ half-edges of color class $i$, then these half-edges do not belong to the same non-trivial part $K_j^i$ from color class $i$.
    Hence, in $G$ there is an edge between $x$ and $y$ in color class $i$ and we can add this edge to $M$.
    Observe that $|M| = \frac{1}{2} |M'|$.
    To see that $M$ is feasible, observe that the degrees of the vertices in $G'$ that correspond to vertices in $G$ are the same in $M'$ and $M$. 
    Therefore, $M$ is a matching. To see that $M$ is  rainbow, observe that by construction from each color class at most one edge is selected in $M$.
    Hence, $M$ is a rainbow matching of size $|M| = \frac{1}{2} |M'|$.
\end{proof}

Since the construction of $(G', l', u')$ is polynomial and maximum $(l, u)$-matchings can be computed in polynomial time~\cite{schrijver}, \Cref{claim:DCS:quasiclique} implies \Cref{thm:dichotomy-result-rainbow-l-u-matchings}.

Notice that in any $\alpha$-CM-colored graph on $m$ edges, there is a polynomial number (at most $\binom{m}{r}$) of rainbow matchings on the at most $r$ color classes that are not CM.
By enumerating all those rainbow matchings and applying \cref{thm:dichotomy-result-rainbow-l-u-matchings} (letting $m_i = 1$ for each color class $i$) to the rest of the graph, we immediately obtain the polynomial part of the dichotomy.

\rmfrfree*
\begin{proof}
    Let $G$ be an $\alpha$-CM-colored graph. Furthermore, let $G_\alpha$ be the subgraph induced by those color classes that are not complete multipartite. Notice that we can decide in polynomial time whether a graph is CM.
    If $\alpha = 0$, then all color classes are complete multipartite.
    By~\cref{thm:dichotomy-result-rainbow-l-u-matchings}, the Rainbow Matching is polynomial on such graphs.
    Now assume $\alpha \geq 1$. 
    Since $\alpha$ is a constant (and hence finite), there exist $O(|E|^\alpha)$ many subsets of edges on the graph $G_\alpha$ that form a rainbow matching. 
    For each of these subsets, 
    we can 
    check in polynomial time whether they form a rainbow matching in $G_\alpha$.

    For each valid rainbow matching $M$ found this way, we remove the edge sets of the $\alpha$ color classes from $G$ and the vertex set incident to $M$.
    In the remaining graph $G'$, all color classes are complete multipartite, since the class of complete multipartite graphs is closed under vertex deletion. 
    By \cref{thm:dichotomy-result-rainbow-l-u-matchings}, we may compute a maximum rainbow matching $M'$ of $G'$ in polynomial time.
    The algorithm returns a rainbow matching $M \cup M'$ of $G$ of maximum cardinality.
    Since all rainbow matchings in $G_\alpha$ can be enumerated in polynomial time and extended to rainbow matchings of $G$ in polynomial time as well (notice that $M$ and $G'$ are vertex disjoint), the algorithm is polynomial.
\end{proof}

\section{Conclusion and Future Work}

We considered the problem \textsc{Maximum Rainbow Matching}, in which we are given an edge-colored graph and the task is to compute a matching of maximum cardinality, subject to picking at most one edge from each color class.
We provided a complexity dichotomy based on the structure of the color classes.
In particular, we showed that \textsc{Maximum Rainbow Matching} is polynomial-time solvable if each color class is a complete multipartite graph, and \NP-hard otherwise.
In light of our application in the introduction, a natural direction for future work is to consider generalizations of the problem to hypergraphs.

To the best of our knowledge, our result is the first complexity dichotomy for rainbow combinatorial optimization problems that is based on the structure of the color classes rather than on the structure of the input graph.
A further direction for future research is to consider other polynomial-time-solvable graph problems and derive analogous complexity dichotomies based on the structure of the color classes.
Typical examples include, for instance, rainbow variants of $s$--$t$ connectivity or arborescences.

Our main positive result on \textsc{Maximum Generalized Rainbow Matching} can be viewed as a matching problem with additional partition matroid constraints, and we give a polynomial-time algorithm whenever the edge set of each color class induces a complete multipartite graph.
We believe it would be interesting to explore settings in which the edges that may be selected from each color class are constrained by more general matroids.
A first step in this direction could be to consider laminar matroids.

\bibliography{references}

\end{document}